\documentclass[runningheads,orivec,envcountsame]{llncs}

\usepackage{centernot}
\usepackage{amssymb}
\usepackage[linesnumbered,noend,boxruled]{algorithm2e}
\usepackage{stmaryrd}
\usepackage{bm}
\usepackage{tikz}
\usepackage{hhline}
\usepackage{pgfplots}
\usetikzlibrary{shapes,calc,arrows,automata,arrows.meta}
\usepackage{multirow}
\usepackage{array}
\usepackage{mathtools}
\usepackage{enumitem}
\usepackage[bookmarks,unicode,colorlinks=true]{hyperref}
\usepackage[capitalize,nameinlink]{cleveref}
\usepackage[location=appendix,manual]{moveproofs}
\usepackage{thm-restate}
\usepackage{cite}
\usepackage{listings}
\usepackage[dvipsnames]{xcolor}
\usepackage{wrapfig}
\usepackage{graphicx}
\usepackage{marvosym}

\usepackage{microtype}
\everypar{\looseness=-1}
\allowdisplaybreaks[4]
\everydisplay=\expandafter{\the\everydisplay\small
\setlength{\abovedisplayskip}{3pt}
\setlength{\belowdisplayskip}{3pt}
\setlength{\abovedisplayshortskip}{1pt}
\setlength{\belowdisplayshortskip}{1pt}}
\setlist{itemsep=1pt, topsep=1pt}

\DontPrintSemicolon

\makeatletter
\RequirePackage[bookmarks,unicode,colorlinks=true]{hyperref}%
   \def\@citecolor{blue}%
   \def\@urlcolor{blue}%
   \def\@linkcolor{blue}%

\def\orcidID#1{\href{http://orcid.org/#1}{\protect\raisebox{-1.25pt}{\protect\includegraphics{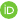}}}}
\makeatother

\makeatletter
\newcommand{\algorithmstyle}[1]{\renewcommand{\algocf@style}{#1}}
\newcommand{\nosemic}{\renewcommand{\@endalgocfline}{\relax}}
\newcommand{\dosemic}{\renewcommand{\@endalgocfline}{\algocf@endline}}
\let\oldnl\nl
\newcommand{\nonl}{\renewcommand{\nl}{\let\nl\oldnl}}
\makeatother

\hypersetup{
  pdftitle={Accelerating Loops with Arrays},
  pdfauthor={Florian Frohn and Jürgen Giesl},
  colorlinks=true,
  linkcolor=blue,
  citecolor=olive,
  filecolor=magenta,
  urlcolor=cyan
}

\usepackage[most]{tcolorbox}

\tcbset{on line, boxrule=0.5pt,
        boxsep=2pt, left=0pt,right=0pt,top=0pt,bottom=0pt,
        colframe=gray,colback=gray!5,coltext=blue,  
        highlight math style={enhanced}
        }

\newtagform{box}[\tcbox]{}{}
\usetagform{box}

\renewcommand{\eqref}[1]{\tcbox[boxsep=1.5pt,boxrule=0.5pt]{\small\ref{#1}}}
\newcommand{\scriptsizeeqref}[1]{\tcbox[boxsep=1pt,boxrule=0.3pt]{\scriptsize\ref{#1}}}

\renewcommand{\top}{\mathsf{true}}
\renewcommand{\bot}{\mathsf{false}}
 \newcommand{\twodots}{\mathinner {\ldotp \ldotp}}

\newcommand{\tool}[1]{{\sf #1}}

\renewcommand{\div}{\mathrel{\mathsf{div}}}
\newcommand{\divides}{\mathrel{\mid}}
\newcommand{\notdivides}{\mathrel{\nmid}}
\renewcommand{\epsilon}{\varepsilon}
\let\oldphi\phi
\let\oldvarphi\varphi
\renewcommand{\varphi}{\oldphi}
\renewcommand{\phi}{\oldvarphi}
\renewcommand{\emptyset}{\varnothing}

\newcommand{\rvec}[1]{\mathbf{#1}}

\newcommand{\ZZ}{\mathbb{Z}}

\newcommand{\NN}{\mathbb{N}}

\newcommand{\VV}{\mathcal{V}}
\newcommand{\LL}{\mathcal{L}}

\newcommand{\TT}{\mathcal{T}}

\newcommand{\RR}{\mathcal{R}}

\newcommand{\Def}{\mathrel{\mathop:}=}

\newcommand{\mat}[1]{\left(\begin{smallmatrix} #1 \end{smallmatrix}\right)}

\newcommand{\mDo}{\mathbf{do}}
\newcommand{\mElse}{\mathbf{else}}
\newcommand{\mDone}{\mathbf{done}}
\newcommand{\mWhile}[2]{\mathbf{while}\ #1\ \mDo\ #2\ \mDone}
\newcommand{\mIf}[2]{\mathbf{if}\ (#1)\ #2}
\newcommand{\mIte}[3]{\mathbf{if}\ (#1)\ #2\ \mElse\ #3}
\newcommand{\assign}{\leftarrow}

\newcommand{\Lval}{\mathit{Lval}}
\newcommand{\Rval}{\mathit{Rval}}

\newcommand{\Expr}{\mathit{Expr}}

\newcommand{\arity}{\mathsf{arity}}

\newcommand{\rec}{\mathsf{rec}}

\newcommand{\Rec}{\mathsf{Rec}}

\newcommand{\sem}[1]{\left\llbracket #1 \right\rrbracket}

\newcommand{\lam}[4]{\lambda #1.\ \mIte{#1 = #2}{#3}{#4}}
\newcommand{\up}{\mathsf{up}}
\newcommand{\rhs}{\mathsf{rhs}}
\newcommand{\written}{\mathsf{written}}
\newcommand{\notwritten}{\mathsf{!written}} 
\newcommand{\lastwrite}{\mathsf{last\_write}}

\newcommand{\unsat}{\mathsf{unsat}}
\newcommand{\unknown}{\mathsf{unknown}}

\SetCommentSty{mycommfont}

\newcommand{\oldcomment}[1]{}

\DeclareMathOperator{\dom}{dom}
\DeclareMathOperator{\img}{img}
\DeclareMathOperator{\true}{\top}
\DeclareMathOperator{\false}{\bot}

\crefname{equation}{eq.}{equations}%
\crefname{chapter}{chapter}{chapters}%
\crefname{section}{sect.}{sections}%
\crefname{appendix}{app.}{appendices}%
\crefname{enumi}{item}{items}%
\crefname{footnote}{footnote}{footnotes}%
\crefname{figure}{fig.}{figures}%
\crefname{table}{table}{tables}%
\crefname{theorem}{thm.}{theorems}%
\crefname{lemma}{lemma}{lemmas}%
\crefname{corollary}{cor.}{corollaries}%
\crefname{proposition}{proposition}{propositions}%
\crefname{definition}{def.}{defs.}%
\crefname{result}{result}{results}%
\crefname{example}{ex.}{examples}%
\crefname{remark}{remark}{remarks}%
\crefname{note}{note}{notes}%

\title{Accelerating Loops with Arrays}
\author{Florian Frohn$^{(\href{mailto:florian.frohn@informatik.rwth-aachen.de}{\mbox{\Letter}})}$\orcidID{0000-0003-0902-1994} \and Jürgen Giesl$^{(\href{mailto:giesl@informatik.rwth-aachen.de}{\mbox{\Letter}})}$\orcidID{0000-0003-0283-8520}}
\institute{RWTH Aachen University, Aachen, Germany}
    \authorrunning{F.\ Frohn, J.\ Giesl}

\begin{document}

\maketitle

\begin{abstract}
  We propose a novel
  \emph{acceleration technique} for
  loops operating on arrays.
  The goal of acceleration is to characterize the transitive closure of loops in a logic which is suitable for automated reasoning.
  Using the new notion of \emph{inductive lvalues}, our technique can handle
loops where previous techniques fail, and it \emph{unifies} acceleration for arrays and
scalar variables by regarding scalars as arrays of dimension 0. 
Moreover, our approach uses
 $\lambda$s instead of quantifiers.
  Then the resulting SMT problems can be solved via
  \emph{lemmas on demand}.
  An empirical evaluation of our implementation in the
  tool \textsf{LoAT}
  shows the power of our approach.
\end{abstract}

\section{Introduction}
\label{sec:intro}

Model checking programs operating on arrays is notoriously difficult.
One reason is that interesting properties of array programs often involve quantifiers, but reasoning about quantifiers is still challenging for SMT solvers.
In particular, they can hardly find models for such formulas, but models are crucial for 
model checking algorithms that use them to prove
unsafety and for abstraction refinement.

Alternatives to abstraction refinement are based on \emph{acceleration techniques}~\cite{bozga10,fast,Boigelot03,bozga09a,octagonsP,differenceBounds,acceleration-calculus,solvable} that try to characterize the transitive closure of a single-path loop.
Then algorithms like Acceleration Driven Clause Learning \cite{adcl} or Accelerated Bounded Model Checking (ABMC) \cite{abmc} can lift acceleration from single-path loops to full programs.
As acceleration does not over-approximate, it is particularly useful for finding counterexamples, but it can also be used for proving safety~\cite{flata}.

We present a new acceleration technique for loops operating on arrays.
In contrast to earlier approaches
\cite{aligators,underapprox15,alberti13,alberti15,albertiJAR15}, it applies to a broader class of loops, due to the novel notion of \emph{inductive lvalues}.
The core idea of inductive lvalues is that the effect of assignments like
$a[i+1] \gets
a[i] + 1; i \gets i+1$ in loop bodies
can be summarized by recurrence equations.
Here, the value $a[i]$ that is read is equal to the value that has been written to $a[i+1]$ in the
previous iteration.

Another benefit of inductive lvalues is that they allow us to \emph{unify} loop acceleration for arrays and scalar variables by regarding scalar variables as arrays of dimension $0$.
In contrast, earlier approaches treat arrays and scalars differently.

One more key difference to earlier approaches is that we
use
$\lambda$s and a specialized quantifier elimination technique to avoid quantifiers.
Even though most SMT solvers have very limited support for $\lambda$s, the resulting SMT problems can usually be solved with a refinement loop in the spirit of \emph{lemmas on demand}~\cite{lemmas-on-demand,lemmas-phd,DBLP:conf/fmcad/PreinerNB13}.

\section{Overview and Contributions}
\label{sec:overview}

\begin{wrapfigure}{r}{0.3\textwidth}
  \vspace{-0.2em}
\begin{lstlisting}[language=C,basicstyle=\scriptsize,frame=single]
int i = 0;
int k = 10000;
while(i < k) {
    a[i+1] = a[i];
    i = i + 1;
}
int j = nondet(0, k);
assert(a[j] != a[0]);
\end{lstlisting}
\end{wrapfigure}
\noindent
We start with an informal overview, where we illustrate our pipeline for verifying array programs, and explain which parts are contributions of this paper.
Consider the {\sf C} program on the right, which writes $a[0]$ to $a[1],\ldots,a[k]$.
Here, \lstinline{nondet(0, k)} returns some integer between $0$ and
$k$. So the final assertion always fails.
There exist various techniques to transform imperative programs into more formal
representations like \emph{transition systems}, resulting in:
\begin{align}
  \mathsf{init}(a) & {} \to \mathsf{while}(a,0,10000) \notag \\
  \mathsf{while}(a,i,k) & {} \to \mathsf{while}(\lambda j.\ \mathrlap{\mIte{j=i+1}{a[i]}{a[j]},i+1,k)} \hspace{10em} & \mathbf{if}\ i < k \label{eq:loop} \tag{\sc Loop}\\
  \mathsf{while}(a,i,k) & {} \to \mathsf{check}(a,j) & \mathbf{if}\ i \geq k \land 0 \leq j \leq k \notag\\
  \mathsf{check}(a,j) & {} \to \mathsf{err} & \mathbf{if}\ a[j] = a[0] \notag
\end{align}
In our evaluation (see \Cref{sec:eval}), this step is done by our tool {\sf
HornKlaus} that transforms \textsf{C} programs into linear Constrained Horn Clauses, which
are equivalent to transition systems.
In the transition \eqref{eq:loop}, ``$\lambda j.\ \mIte{j=i+1}{a[i]}{a[j]}$'' denotes the array that results from writing $a[i]$ to $a[i+1]$.

To prove unsafety, one has to show that the location $\mathsf{err}$ can be reached from $\mathsf{init}$.
The core idea of acceleration is to replace loops like \eqref{eq:loop} with transitions like the one below that can simulate arbitrarily many iterations in a single step:
\begin{align}
  \mathsf{while}(a,i,k) \to \mathsf{while}(a',i+n,k) \ \ \ \mathbf{if}\ & n > 0 \land i + n \leq k \land \forall j \in \ZZ \label{eq:accel} \tag{\sc Accel}\\ 
                                                                         & \quad (i < j \leq i+n \implies a'[j] = a[i]) \land {} \notag \\
                                                                         & \quad (j \leq i \lor i+n < j \implies a'[j] = a[j]) \notag
\end{align}
It simulates $n$ iterations, where $n$ is chosen non-deterministically.

For this step, two ingredients are needed:
First, one needs a \emph{closed form} for each variable $x$ that expresses the value of $x$ after $n$ iterations.
For scalar variables like $i$, \emph{closed form expressions} are
computed by solving recurrence equations, resulting in the closed form $i+n$.
So a closed form expression for a variable $x$ involves program variables and the additional variable $n$, and its value is equal to the value of $x$
after $n$ iterations, for all instantiations of the occurring
variables.

For arrays, existing acceleration techniques characterize closed
forms
via quantified formulas of the form ``$\forall j \in \ZZ.\ \phi$'' instead of expressions, where $j$ is used
as array index, as in \eqref{eq:accel}.
The reason is that the standard theory of arrays
\cite{mccarthy-arrays}
is not expressive enough to admit suitable
closed form expressions.

In this paper, we propose to use $\lambda$-expressions as a suitable language to express closed forms for arrays.
In our example, instead of \eqref{eq:accel}, our approach~yields:
\begin{equation}
  \label{eq:lambda}
  \tag{$\lambda$}
  \begin{aligned}
    &\mathsf{while}(a,i,k) \to {} \\
    &\mathsf{while'}\left(\lambda j.\ \mIte{i < j \leq i+n}{a[i]}{a[j]} ,i+n,k\right)
  \end{aligned}
  \ \mathbf{if}\ n > 0 \land i + n \leq k.
\end{equation}
\noindent
Moreover, our approach can characterize closed forms exactly in cases where earlier approaches fail or approximate.
To achieve this, we introduce \emph{inductive lvalues}.
An lvalue is an expression like $i$ or $a[i]$ that refers to a memory location.
We call an lvalue like $a[i]$ in \eqref{eq:loop} \emph{inductive}, as the value that is read from $a[i]$ is equal to the value that has been written to $a[i+1]$ in the previous iteration.
Thus, the value that is referenced by $a[i]$ in the $n^{th}$ iteration can be defined recursively, which gives rise to recurrence equations, as in the case of scalars. 
This allows us to accelerate loops like \eqref{eq:loop}
where earlier approaches fail, and to handle arrays and scalars uniformly by regarding scalars as arrays of dimension~$0$.

While $\lambda$s can replace the outermost universal quantifier (as in the step from \eqref{eq:accel} to \eqref{eq:lambda}), in general, nested quantifiers are needed to characterize closed forms for arrays.
As most acceleration based verification techniques rely on SMT solvers, and SMT solvers still struggle with quantifiers, our goal is to also eliminate the remaining quantifiers.
To this end, we identify an important class of loops where all quantifiers can indeed be
eliminated, see \Cref{sect:qe}.

The second ingredient for acceleration is a formula which guarantees that the loop can be executed $n$ times.
In our example, this is ensured by the constraint $i+n \leq k$.
In our implementation, such formulas are synthesized using the technique from
\cite{acceleration-calculus}.
The details are beyond the scope of this paper, but one core idea is to
exploit monotonicity:
In our example, if $i<k$ holds after executing the loop body, then it also held
before. So if $i<k$ holds before the $n^{th}$ iteration, then the loop can be executed $n$ times. At this point,
the value of $i$ is $i+n-1$, so we obtain $i+n \leq k$. 
While originally developed for integer loops, the approach from \cite{acceleration-calculus} immediately carries over to our setting.

Now our verification problem can be encoded into an SMT formula with $\lambda$s:
\begin{align*}
  &{\overbrace{i = 0 \land k = 10000}^{\mathclap{\text{initialization}}}} \land {\overbrace{n > 0 \land i + n \leq k}^{\text{guard of \eqref{eq:lambda}}}} \land {} \\
  &i' = i + n \land a' = \lambda j.\ \mIte{i < j \leq i+n}{a[i]}{a[j]} \land {\quad\Big\}} \text{update of \eqref{eq:lambda}} \\
  &{\underbrace{i' \geq k \land 0 \leq j \leq k}_{\text{transition to } \mathsf{check}}} \land {\underbrace{a'[j] = a'[0]}_{\mathclap{\text{transition to } \mathsf{err}}}}
\end{align*}
Unfortunately, the major SMT solvers usually fail if $\lambda$s cannot be $\beta$-reduced (i.e., evaluated).
However, SMT problems with $\lambda$s can often be solved via
\emph{lemmas on demand}~\cite{lemmas-on-demand,lemmas-phd,DBLP:conf/fmcad/PreinerNB13}.
Here, the main idea is to abstract $\lambda$s with uninterpreted functions, and refine this abstraction with suitable \emph{lemmas} whenever a model for the abstraction cannot be lifted to the original problem.
In this way, we can prove satisfiability of the formula above, which implies unsafety of our example.

To summarize, our main contributions are:
{\bf(1)} \emph{inductive lvalues} that allow for accelerating loops where existing techniques fail,
{\bf(2)} the first \emph{uniform} approach for\linebreak handling scalars and arrays in acceleration,
{\bf(3)} the use of $\lambda$s to accelerate arrays, where the resulting SMT problems can be solved via \emph{lemmas on demand},
{\bf(4)} a quantifier elimination technique which often yields quantifier-free closed
forms (with $\lambda$s) for arrays, and
{\bf(5)} a competitive implementation in our tool \tool{LoAT}.
\subsubsection{Outline:}
  After introducing preliminaries in \Cref{sec:preliminaries}, our main contributions are presented in \Cref{sec:closed,sect:qe}.
Related work is discussed in \Cref{sec:related}, and we evaluate our approach empirically and conclude in \Cref{sec:eval}.
App.~\ref{sec:lambdas} explains how we handle $\lambda$s in SMT, and App.~\ref{sec:proofs} contains all proofs.

\section{Preliminaries}
\label{sec:preliminaries}

We write $\vec{v}$ and $\rvec{v}$ for column and row vectors, and $v_i$
is the $i^{th}$ element of $\vec{v}$ or $\rvec{v}$.

\subsubsection{Arrays and Expressions:}

Let $\VV$ be a countably infinite set of variables,
where each $x \in \VV$ has a designated \emph{arity} $\arity(x) \in \NN$.
Then $x$ ranges over $\ZZ^{\ZZ^{\arity(x)}}$, i.e., over all functions with domain
$\ZZ^{\arity(x)}$ and codomain $\ZZ$ (i.e., $x$ maps row vectors to integers).
Intuitively, $x$ represents an integer array of dimension $\arity(x)$, i.e., we only consider integer valued arrays.
So $x$ represents a scalar if $\arity(x) = 0$, and a 1-dimensional array which is infinite in both directions if $\arity(x) = 1$.
Considering infinite arrays is in line with SMT-LIB~\cite{SMTLIB}.
To model finite arrays, array-accesses must be guarded with
suitable constraints.
Note that we interpret arrays as uncurried functions, i.e., $x[\rvec{i}]$ is not
a valid expression if $|\rvec{i}| \neq \arity(x)$.
So if $a$ is a 2-dimensional array, then $a[0]$ is undefined.
In contrast, $a[0]$ would be a 1-dimensional array in, e.g., {\sf C}.
Handling ``curried arrays'' is future work.

We use $c,i,j,k,m,n$ for scalar variables and integer constants, $a,b$ for
non-scalars (i.e., for arrays of dimension $> 0$), and $x,y$ can represent both scalars and non-scalars.
For any entity $e$, $\VV(e)$ denotes the variables that occur freely in $e$.

An expression of the form $x[\rvec{c}]$ where $\rvec{c} \in \ZZ^{\arity(x)}$
refers to a  \emph{cell}, where
each cell stores an integer value.
The set $\Rval$ of \emph{rvalues} is defined by the grammar
\begin{align*}
  \Rval \mathrel{\mathop{::=}} c \mid \Rval \circ \Rval \mid x[\rvec{r}]
\end{align*}
where $c \in \ZZ$, ${\circ}$ is an arithmetic operator,\footnote{We use ${+},{-},{\cdot},$ and ${\div}$ (for integer division), but other operators are possible as well.} $x \in \VV$, and $\rvec{r} \in \Rval^{\arity(x)}$.
The set $\Lval$ of \emph{lvalues} is defined as
\begin{align*}
  \Lval \Def \{x[\rvec{r}] \mid x \in \VV, \rvec{r} \in \Rval^{\arity(x)}\}.
\end{align*}
So an rvalue (lvalue) is an expression that may occur on the \underline{r}ight-hand (\underline{l}eft-hand) side
of an assignment.
We write $i$ instead of $i[\,]$ if $i \in \VV$ has arity $0$. 
For rvalues $r \in \Rval$ we define $\Lval(r)$ to be the set of $r$'s ``top-level'' lvalues, i.e.,
\[
  \Lval(r) \Def
  \begin{cases}
    \{r\} & \text{if } r \in \Lval \\
    \Lval(r_1) \cup \Lval(r_2) & \text{if } r = r_1 \circ r_2\\
    \emptyset & \text{if } r \in \ZZ
  \end{cases}
\]
So we have $\Lval(a[i] + 7) = \{a[i]\}$ and, importantly, $i \notin \Lval(a[i] + 7)$, i.e., $\Lval(r)$ only contains the lvalues occurring in $r$ that are not nested below other lvalues.

An \emph{array expression} is a variable from $\VV$, or of the form $\lambda
\rvec{i}.\ e$,
where $\rvec{i}$ is a vector of scalar variables, and $e \in \Expr$ is an \emph{expression}.
We define $\arity(\lambda \rvec{i} \ldots) \Def |\rvec{i}|$.
Expressions are generated by the grammar
\begin{align*}
  \Expr \mathrel{\mathop{::=}} c \mid \Expr \circ \Expr \mid p[\rvec{e}] \mid \mIte{\mu}{\Expr}{\Expr}
\end{align*}
where $c \in \ZZ$, ${\circ}$ is again an arithmetic operator, $p$ is an array expression, $\rvec{e} \in \Expr^{\arity(p)}$, and $\mu$ is a first-order formula over expressions involving the relations ${>},{\neq},\ldots$, and ${\divides}$, and the usual Boolean connectives.
Here, ${\divides}$ denotes divisibility, i.e., $c \divides i$ is true iff $c$ divides $i$.
So ``$\mIte{j=i}{a[i-1]}{a[j]}$'' is an expression, and ``$\lambda
j.\ \mIte{j=i}{a[i-1]}{a[j]}$'' is an array expression of arity 1.
We use $r$ for rvalues, $\ell$ for lvalues, $p,q$ for array expressions, and $e$ for expressions.

\subsubsection{Loops:}

We consider single-path loops (without branching in their body):
\[
  \label{loop}
  \tag{\ensuremath{\TT_{loop}}}
  \mWhile{\phi}{
      \vec{\ell}
    \assign
      \vec{r}
  }
\]
Here, $\phi$ is a conjunction of (in)equations over rvalues, $\vec{\ell} = \mat{x_1[\rvec{r}_1]\\ \cdots\\x_{m}[\rvec{r}_{m}]}$ is a vector of lvalues such that\footnote{\label{fn:guard}A less restrictive definition would require $\phi \implies \eqref{eq:distinct-indices}$.
  We use the definition above instead to simplify the presentation.
}
\begin{equation}
  \label{eq:distinct-indices}
  \tag{\sc Distinct}
  \forall i,j \in [1 \twodots m].\ i \neq j \implies x_i \neq x_j \lor \rvec{r}_i \neq \rvec{r}_j
\end{equation}
holds, and $\vec{r}  \in \Rval^{m}$.
Here, $[1 \twodots m]$ denotes the closed integer interval $\{1,\ldots,m\}$ (and ``$[1 \twodots m)$'' denotes a half-open integer interval).
Intuitively, the body of \ref{loop} updates all lvalues $\vec{\ell}$ simultaneously\footnote{\label{fn:seq}Considering sequential updates instead requires additional technicalities, so we restrict our attention to simultaneous updates for the sake of simplicity.} by overwriting each $x_i[\rvec{r}_i] \in \vec{\ell}$ with $r_i$, which is the $i^{th}$ component of $\rvec{r}$.
We also use the notation $\rhs(x_i[\rvec{r}_i]) \Def r_i$.
Such loops correspond to recursive transitions like \eqref{eq:loop} in \Cref{sec:overview}.
\eqref{eq:distinct-indices} ensures that two different lvalues cannot
refer to the same cell, which prevents loops like
\[
  \mWhile{a[j] = 0}{
    \mat{
      a[i]\\
      a[j]
    }
    \gets
    \mat{
      0\\
      1
    }
  }.
\]
As $a[i]$ and $a[j]$ are written simultaneously, here it is unclear if $a[j]$ is $0$ or $1$ after the loop body if $i=j$.
In the sequel, \ref{loop} denotes an arbitrary but fixed loop.

\begin{example}[Running Example]
  \label{ex:swap}
  We use the following loop as running example:
  \begin{equation}
    \label{leadLoop}
    \tag{\protect{\ensuremath{\TT_{\mathit{swap}}}}}
    \mWhile{i < k}{
      \mat{i \\ a[i+1] \\ a[i]} \assign \mat{i+1 \\ a[i] \\ a[i+1]}
    }
  \end{equation}
  It swaps the values of $a[i]$ and $a[i+1]$ in each iteration.
  So if the initial value of $a[i]$ is $c$, then after $n$ iterations the value of
  $a[i+n]$ is $c$, and the initial values of $a[i+1],\ldots,a[i+n]$ 
  have been shifted one position to the left.
\end{example}
For each $x \in \VV$,
$\up_x(\vec{\ell})$ is an array expression that describes the array  $x$ after one iteration of the
loop body. More precisely, we define $\up_x(\vec{\ell})$ inductively as follows:
  \begin{equation}
  \tag{$\begin{smallmatrix} \mathit{read}\\\mathit{write} \end{smallmatrix}$}
  \label{eq:sem}
  \hspace*{-2.2cm}
  \up_x(\vec{\ell}) \Def \begin{cases}x & \text{if } \vec{\ell} \text{ is empty} \\
    \lam{\rvec{i}}{\rvec{r}}{\rhs(x[\rvec{r}])}{\up_x(\vec{\ell}')[\rvec{i}]} & \text{if } \vec{\ell} = \mat{x[\rvec{r}]\\\vec{\ell}'} \\
    \up_x(\vec{\ell}') & \text{if } \vec{\ell} = \mat{y[\rvec{r}]\\\vec{\ell}'} \mathrlap{\text{ where } x \neq y}\hspace{2.35em}
  \end{cases}
\end{equation}
Here, the scalars $\rvec{i} \in \VV^{\arity(x)}$ do not occur freely in $\rvec{r}, \rhs(x[\rvec{r}])$, or $\up_x(\vec{\ell}')$.
Moreover, we define $\up \Def [x/\up_x(\vec{\ell}) \mid x \in \VV]$, i.e., $\up$ is the substitution that replaces each variable $x$ with the result of updating $x$ according to the body of the loop.
\begin{example}[$\up$]
  For \ref{leadLoop}, we have
  \[
    \up_i\mat{i\\a[i+1]\\a[i]} \quad = \quad \lambda.\ \mIte{[] =
      []}{i+1}{\up_i\mat{a[i+1]\\a[i]}} \quad = \quad i+1
  \]
  where $\lambda.\ \ldots$ denotes a function of arity $0$, and:
  \begin{align*}
    \up_a\mat{i\\a[i+1]\\a[i]} & {} = \up_a\mat{a[i+1]\\a[i]} =
   \lam{j}{i+1}{a[i]}{\up_a(a[i])[j]} \\ 
           & {} = \lam{j}{i+1}{a[i]}{\mIte{j=i}{a[i+1]}{\up_a()[j]}} \\
           & {} = \lam{j}{i+1}{a[i]}{\mIte{j=i}{a[i+1]}{a[j]}}
  \end{align*}
\end{example}
The definition of \eqref{eq:sem} corresponds to the \emph{read-over-write axioms}~\cite{mccarthy-arrays}:
\begin{align*}
  \rvec{i} = \rvec{r} \implies \up_x(x[\rvec{r}])[\rvec{i}] = \rhs(x[\rvec{r}]) \qquad\qquad
  \rvec{i} \neq \rvec{r} \implies \up_x(x[\rvec{r}])[\rvec{i}] = x[\rvec{i}]
\end{align*}
So our notion of arrays is compatible with the standard first-order theory of arrays.
However, our array expressions are more expressive than expressions in the standard
theory.
As an example, consider the array expression
\begin{equation}
  \label{eq:inexpressible}
  \lambda i.\ \mIte{0 \leq i \leq j}{0}{a[i]}.
\end{equation}
Here, the number of elements where $a$ and
\eqref{eq:inexpressible} differ is not fixed (it depends on $j$). Expressions for such arrays do not exist in the standard theory.

A \emph{state} $s$ maps each $x \in \VV$ to a function of arity $\arity(x)$.
The semantics $\sem{e}_s$ of an (array) expression $e$ w.r.t.\ a state $s$ is defined in
the obvious way
(see App.~\ref{sec:semantics}), and it is lifted to formulas over (array) expressions as usual.
In particular, the semantics of an array expression is a
function.
Given a first-order formula $\psi$, we often say that ``$\psi$ holds'' or
``$\psi$ is valid'',  which means that $\sem{\psi}_s = \true$ for all states $s$.
Now we can define the \emph{transition relation} of \ref{loop}.

\begin{definition}[Transition Relation]
  We define the relation ${\to_{\ref{loop}}}$ on states as follows:
  We have $s \to_{\ref{loop}} s'$ if $\sem{\phi}_s = \true$ and
  $s'(x) = \sem{\up(x)}_s$ for all $x \in \VV$.  
\end{definition}

\begin{example}[Transition Relation]
  Let $s(i) = 0$, $s(k) = 5$, and $s(a) = \lambda j.\ j$.
  Then $s \to_{\ref{leadLoop}} s'$ with $s'(i) =
  \sem{\up(i)}_s = \sem{i+1}_s =  s(i)+1 = 1$,
$s'(k) = s(k) = 5$,
  and:
  \begin{align*}
    s'(a) = \sem{\up(a)}_s = {} & \sem{\lambda j.\ \mIte{j = i+1}{a[i]}{\mIte{j=i}{a[i+1]}{a[j]}}}_s \\
    {} = {} & \lambda j.\ \mIte{j = 1}{0}{\mIte{j=0}{1}{j}}
  \end{align*}
\end{example}

\subsubsection{Acceleration:}

Our goal is to \emph{accelerate} loops like \ref{loop}.
\begin{definition}[Acceleration]
  \label{def:accel}
  Let ${\to^n_{\TT}}$ denote the $n$-fold closure of $\to_{\TT}$.
  An \emph{acceleration technique} is a partial function $\mathsf{accel}$, mapping loops
  to first-order formulas over $\VV$, $\VV' \Def \{x' \mid x \in \VV\}$, and a
  designated variable $n$ such that
  \[
     \mathsf{accel}(\TT)\;[x/s(x),x'/s'(x) \mid x \in \VV] \iff s \to^n_{\TT} s'
  \]
  holds for all $\TT \in \dom(\mathsf{accel})$, $n \in \NN_{>0}$, and all states $s,s'$.
  Here, $[x/s(x),x'/s'(x) \mid x \in \VV]$ is the substitution that replaces $x$ by $s(x)$
   and $x'$ by $s'(x)$, for all $x \in \VV$.
\end{definition}
So acceleration yields a formula over $\VV$ and $\VV'$, representing the values of the program variables before
and after $n$ loop iterations, and a dedicated loop counter $n$.
This formula characterizes the $n$-fold closure of \ref{loop} precisely.\footnote{Some acceleration techniques also construct under-approximations, but since we show how to accelerate certain loops precisely, \Cref{def:accel} does not allow approximations.}
\begin{example}[Acceleration]
  \label{ex:accel}
  Consider the loop
  \[
    \mWhile{i > 0}{
      i \assign i - 1
    }
  \]
  Here, an acceleration technique may yield the formula $i - n \geq 0 \land i'
  = i - n$.
\end{example}
As explained in \Cref{sec:overview},  many acceleration techniques rely on \emph{closed
form expressions} that characterize the values of the variables after $n$ \pagebreak[3] iterations of
the loop. \vspace*{-.3cm} 
\begin{definition}[Closed Forms]
  \label{def:closed-var}
  We call an expression $x^{(n)}$ over $\VV \cup \{n\}$ a \emph{closed form} for $x
  \in \VV$ if $x^{(n)} = \up^n(x)$ holds for all $n \in \NN$.
\end{definition}
For integers, closed forms are usually computed via \emph{recurrence solving}.
\begin{example}[Closed Forms]
  For \Cref{ex:accel}, a closed form for $i$ is computed by solving the recurrence equation $i^{(n)} = i^{(n-1)} - 1$ with the initial condition $i^{(0)} = i$.
\end{example}
Given closed forms for each $x \in \VV$, an acceleration technique can return
\[
  \left( \forall k \in [0 \twodots n).\ \phi[x/x^{(k)} \mid x \in \VV] \right) \land \bigwedge_{x \in \VV(\ref{loop})} x' = x^{(n)}.
\]
However, most acceleration techniques try to avoid quantifiers, as
they are challenging for automated reasoning.
Instead, the calculus from \cite{acceleration-calculus} can be used (see \Cref{sec:overview}), which
can often compute suitable quantifier-free formulas.
Therefore, in the sequel, we focus on computing closed forms for loops involving arrays.

\section{Closed Forms for Lvalues}
\label{sec:closed}

We now show how to compute closed forms for array loops.
To this end, we impose mild restrictions.
First, we require that the indices in $\vec{\ell}$ behave monotonically w.r.t.\ the
lexicographic order (where we have $\rvec{r} < \rvec{r}'$ if there is an $i \in [1
\twodots |\rvec{r}|]$ such
that $r_i < r'_i$ and $r_j = r'_j$ hold for all $j \in [1 \twodots i)$). 
\ref{loop} is
\emph{$x$-increasing} if the indices $\rvec{r}$ where $x$ is updated by the loop increase their values weakly in each iteration.

\begin{definition}[Monotonic Loop]
  \label{def:monotonic}
  \ref{loop} is \emph{$x$-increasing} or \emph{$x$-decreasing} if
  \[
    \bigwedge_{x[\rvec{r}] \in \vec{\ell}} \rvec{r} \leq \up(\rvec{r}) \qquad \text{or}
    \qquad \bigwedge_{x[\rvec{r}] \in \vec{\ell}} \rvec{r} \geq \up(\rvec{r})
  \]
  is valid, respectively.
  If \ref{loop} is $x$-increasing or $x$-decreasing, then it is \emph{$x$-mono\-ton\-ic}.
  If \ref{loop} is $x$-monotonic for each $x \in \VV$, then \ref{loop} is \emph{monotonic}.
\end{definition}
\begin{example}\label{ex:monotonic}
  Reconsider \ref{leadLoop}, where $a$ is accessed via the
  indices $i$ and $i+1$ in $\vec{\ell}$.
    Since we have $i \leq \up(i) = i+1$, \ref{leadLoop}
     is $a$-increasing and thus $a$-monotonic.
  As $i$ is a scalar variable, \ref{leadLoop} is also $i$-monotonic.
  Thus, \ref{leadLoop} is monotonic.
\end{example}
For simplicity, in the sequel we assume that $x$-monotonic loops are $x$-increasing.
Lifting our approach to arbitrary monotonic loops is straightforward.

A monotonic loop \ref{loop}
  is \emph{array-solvable}, or \emph{a-solvable} for short, if each set $\Lval(r_i)$ only refers to array cells that have been written in the previous iteration, or only to array cells that have not been written yet.

\begin{definition}[A-Solvable Loop]
  \label{def:solvable}
  Let $\LL$ be the smallest set such that:
  \[
    \Lval(\vec{r}) \subseteq \LL \qquad \text{and} \qquad \text{if $x[\rvec{r}] \in \LL$, then $\Lval(\rvec{r}) \subseteq \LL$}
  \]
  An lvalue $x[\rvec{r}] \in \LL$ is
  \begin{itemize}
  \item \emph{trivial} if for all $y[\rvec{r}'] \in
    \vec{\ell}$,
    we have $y \notin \VV(x[\rvec{r}])$
      \item \emph{inductive} if $x[\up(\rvec{r})] \in \vec{\ell}$
      \item \emph{displacing} if for all $\rvec{r}'$ such that $x[\rvec{r}'] \in \vec{\ell}$, \pagebreak[3]
        we have $\rvec{r}' < \up(\rvec{r})$
  \end{itemize}
  \ref{loop} is \emph{a-solvable} if it is monotonic, each element of $\LL$ is trivial,
  inductive, or displacing, and one of the following holds for each $i \in
  [1 \twodots |\vec{r}|]$:
  \begin{enumerate}[label=(\alph*)]
  \item \label{it:rec} all lvalues in $\Lval(r_i)$ are trivial or inductive
  \item \label{it:dis} all lvalues in $\Lval(r_i)$ are displacing
  \end{enumerate}
\end{definition}
So $x[\rvec{r}]$ is trivial if none of the variables occurring in $x[\rvec{r}]$ are changed by \ref{loop}.
Inductive lvalues $x[\rvec{r}]$ give rise to inductive definitions, as reading $x[\rvec{r}]$ in the $n^{th}$ iteration yields the value that
has been written to the same cell (i.e., to $x[\up(\rvec{r})]$) in the $(n-1)^{th}$ iteration.
In contrast, the cell that is referenced by a displacing lvalue $x[\rvec{r}]$ has not been written before it is being read, as all (increasing) indices that were used for write accesses to $x$ in earlier iterations are smaller than $\rvec{r}$ in the $n^{th}$ iteration.
Note that trivial lvalues are a special case of displacing lvalues.
For $x$-decreasing loops, the inequation $\rvec{r}' < \up(\rvec{r})$ becomes $\rvec{r}' > \up(\rvec{r})$.

\begin{example}[A-Solvable]
  For \ref{leadLoop}, we have  $r_1 = i+1$, $r_2 = a[i]$,
  $r_3=a[i+1]$, and $\LL = \{i, a[i], a[i+1]\}$.
 For $\Lval(r_1) = \{i\}$, we have $i \in \vec{\ell}$, i.e., $i$ is inductive.
 For $\Lval(r_2) = \{a[i]\}$, we have $a[\up(i)] = a[i+1] \in \vec{\ell}$, so
$a[i]$ is inductive, too.
 For $\Lval(r_3) = \{a[i+1]\}$, we have $a[\up(i+1)] = a[i+2]$.
So $a[i+1]$ is displacing, as  $i < i+2$ and $i+1 < i+2$ hold for the accesses $a[i]$ and $a[i+1]$ to $a$ in $\vec{\ell}$.
  Hence, \ref{leadLoop} is a-solvable.
\end{example}
Earlier acceleration techniques for arrays could essentially only handle
displacing lvalues without approximations, i.e., loops where each cell is
written at most once, see \Cref{sec:related}.
Moreover, they treated arrays and scalars differently.
Our new notion of inductive lvalues allows us to
also handle examples like \ref{leadLoop}, and to
unify the treatment of arrays and scalars (by regarding them as arrays of dimension $0$).
\Cref{ex:mix} shows why we disallow mixing inductive and displacing lvalues in $\Lval(r_i)$.

\begin{example}[Mixing Inductive and Displacing Lvalues]
  \label{ex:mix}
  Consider the loop
  \[
    \mWhile{i < k}{
      \mat{i \\ a[i + 1]} \assign \mat{i+1 \\ a[i] + a[i+1]}
    }
  \]
  where $a[i]$ is inductive and $a[i+1]$ is displacing.
  We have
  $\up^{n}(a[i]) = \sum_{j=i}^{i+n} a[j]$, which is not an expression as defined in \Cref{sec:preliminaries}.
For such loops, extensions of the underlying theory are required, see, e.g., \cite{array-sums}.
We leave that to future work.
\end{example}
We first show how to compute closed forms for all lvalues in $\LL$.
\begin{definition}[Closed Forms for Lvalues]
  \label{def:closed}
  We call an expression
  $\ell^{(n)}$ over $\VV \cup \{n\}$ a \emph{closed form} for $\ell \in \Lval$ if $\ell^{(n)} = \up^n(\ell)$ holds for all $n \in \NN$.
\end{definition}
For trivial lvalues, computing closed forms is straightforward.
\begin{lemma}[Closed Forms for Trivial Lvalues]
  \label{lem:trivial}
  If $\ell \in \LL$ is trivial, then $\ell^{(n)} \Def \ell$ is a closed form for $\ell$.
\end{lemma}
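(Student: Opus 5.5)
Since $\ell^{(n)} \Def \ell$ is an expression over $\VV \subseteq \VV \cup \{n\}$, by \Cref{def:closed} it suffices to show that $\up^n(\ell) = \ell$ is valid for all $n \in \NN$. I will first establish the one-step identity $\up(\ell) = \ell$ and then conclude by a trivial induction on $n$: the case $n = 0$ is immediate, and $\up^{n+1}(\ell) = \up^n(\up(\ell)) = \up^n(\ell) = \ell$ by the one-step identity and the induction hypothesis.

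For the one-step identity, I will use that $\up$ is the substitution $[y/\up_y(\vec{\ell}) \mid y \in \VV]$, so $\up(\ell)$ only depends on $\up_y(\vec{\ell})$ for the variables $y \in \VV(\ell)$. Triviality of $\ell = x[\rvec{r}]$ says that no $y[\rvec{r}'] \in \vec{\ell}$ has $y \in \VV(x[\rvec{r}])$. Hence, for each $y \in \VV(\ell)$, every entry of $\vec{\ell}$ falls into the third case of \eqref{eq:sem}, and unfolding the definition along $\vec{\ell}$ (an induction on the length of $\vec{\ell}$) yields $\up_y(\vec{\ell}) = y$. Since the substitution $\up$ therefore maps every free variable of $\ell$ to itself, it acts as the identity on $\ell$, so $\up(\ell) = \ell$ holds syntactically, and in particular is valid in every state.

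The only point needing any care is that $\VV(\ell)$ includes variables nested inside the index rvalues $\rvec{r}$ (not just the array name $x$). This is exactly why triviality is phrased with $\VV(x[\rvec{r}])$ rather than merely $x$, so the argument goes through. No real obstacle is expected; the proof amounts to unfolding definitions.
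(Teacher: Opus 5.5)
Your proof is correct. The paper gives no explicit proof of this lemma and only remarks that the trivial case is straightforward; your argument is the intended unfolding of definitions: $\up_y(\vec{\ell}) = y$ for every $y \in \VV(\ell)$, hence $\up(\ell) = \ell$ syntactically, and then $\up^n(\ell) = \ell$ by induction on $n$.
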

\Cref{thm:displacing} yields closed forms for displacing lvalues $x[\rvec{r}]$ if closed forms for all $\ell \in \Lval(\rvec{r})$ are known.
Here, we use the notation $\rvec{r}^{(n)} \Def \rvec{r}[\ell / \ell^{(n)} \mid \ell \in \Lval(\rvec{r})]$.\footnote{A similar observation has been used in \cite{fase09} to deduce invariants.}

\begin{restatable}[Closed Forms for Displacing Lvalues]{theorem}{displacing}
  \label{thm:displacing}
  If \ref{loop} is monotonic and $x[\rvec{r}] \in \LL$ is displacing, then
  $
    x[\rvec{r}]^{(n)} \Def x[\rvec{r}^{(n)}]
  $
  is a closed form for $x[\rvec{r}]$.
\end{restatable}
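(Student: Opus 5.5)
We need $\up^n(x[\rvec{r}]) = x[\rvec{r}^{(n)}]$ for all $n \in \NN$. Throughout, we assume (as the theorem implicitly does) that closed forms $\ell^{(n)}$ for all $\ell \in \Lval(\rvec{r})$ are already available; they are combined exactly as in the definition of $\rvec{r}^{(n)}$.

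Since $\up$ is a substitution, $\up^n(x[\rvec{r}]) = \up^n(x)[\up^n(\rvec{r})]$. Because $\up^n$ commutes with the arithmetic operators in $\rvec{r}$, we get $\up^n(\rvec{r}) = \rvec{r}[\ell/\up^n(\ell) \mid \ell \in \Lval(\rvec{r})]$, and this equals $\rvec{r}^{(n)}$ by the assumed closed forms. It therefore suffices to show that $\up^n(x)[\rvec{r}^{(n)}] = x[\rvec{r}^{(n)}]$ is valid. Semantically, this says that the cell at index $\rvec{r}^{(n)}$ has not been written during the first $n$ iterations.

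\textbf{Index of the $k$-th write.} I will prove, by induction on $n$, the stronger statement that for every $\rvec{c}$ with $\rvec{c} > \up^{n-1}(\rvec{r}')$ for all $x[\rvec{r}'] \in \vec{\ell}$, we have $\up^n(x)[\rvec{c}] = x[\rvec{c}]$. The key fact about the $k$-th iteration (counting from $1$) is that it writes $x$ only at indices $\up^{k-1}(\rvec{r}')$ with $x[\rvec{r}'] \in \vec{\ell}$. This follows from \eqref{eq:sem}: $\up^k(x) = \up^{k-1}(\up_x(\vec{\ell}))$, and $\up_x(\vec{\ell})[\rvec{c}] = x[\rvec{c}]$ unless $\rvec{c}$ equals some $\rvec{r}'$. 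In terms of iterations, it is cleaner to write $\up^k(x)[\rvec{c}] = \up^{k-1}(x)[\rvec{c}]$ whenever $\rvec{c} \neq \up^{k-1}(\rvec{r}')$ for all write indices $\rvec{r}'$ of $x$, which unfolds $\up^k = \up^{k-1} \circ \up$ at the last step.

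\textbf{Monotonicity across iterations.} Since $\rvec{r}' \leq \up(\rvec{r}')$ is valid, applying $\up^{j}$ (validity is preserved under substitution) and chaining by transitivity gives $\up^{k}(\rvec{r}') \leq \up^{n-1}(\rvec{r}')$ for all $k \leq n-1$.

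\textbf{The displacing condition.} We have $\rvec{r}' < \up(\rvec{r})$ for all write indices $\rvec{r}'$ of $x$. Applying $\up^{n-1}$ yields $\up^{n-1}(\rvec{r}') < \up^n(\rvec{r}) = \rvec{r}^{(n)}$. Combining this with monotonicity shows that $\rvec{r}^{(n)}$ differs from every write index $\up^{k}(\rvec{r}')$ with $k < n$. Peeling off one iteration at a time then gives $\up^n(x)[\rvec{r}^{(n)}] = x[\rvec{r}^{(n)}]$, with the base case $n = 0$ being trivial.

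\textbf{Main obstacle.} The delicate step is keeping the bookkeeping between syntactic substitution and semantics consistent. Specifically, one must justify that $\up^n(\up_x(\vec{\ell})[\rvec{c}])$ unfolds with indices $\up^{k}(\rvec{r}')$ at the correct iteration, and the order of composition, $\up^n(x) = \up^{n-1}(\up(x))$ versus $\up(\up^{n-1}(x))$, matters here. I would argue semantically, over states $s_0 \to s_1 \to \cdots$, without requiring the guard $\phi$ (since $\up$ ignores it). There, $s_k(x)$ differs from $s_{k-1}(x)$ only at indices $\sem{\rvec{r}'}_{s_{k-1}}$, and the inequalities above hold evaluated in each state.
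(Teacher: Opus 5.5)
Your proof is correct, and its skeleton is the paper's. You reduce, via $\up^n(x[\rvec{r}]) = \up^n(x)[\up^n(\rvec{r})]$ and the closed forms for $\Lval(\rvec{r})$, to showing that the cell at $\up^n(\rvec{r})$ is never overwritten. You then prove this by peeling off one iteration at a time with the read-over-write unfolding of $\up^{k}(x) = \up^{k-1}(\up_x(\vec{\ell}))$. That is exactly the paper's auxiliary Lemma (Stability of Displacing Lvalues).

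The one real difference is how you obtain $\up^k(\rvec{r}') \neq \up^n(\rvec{r})$ for $k < n$. The paper chains $\up^k(\rvec{r}') < \up^{k+1}(\rvec{r}) \leq \up^n(\rvec{r})$: it applies the displacing condition at iteration $k$ and then monotonicity to the \emph{read} index $\rvec{r}$. You chain $\up^k(\rvec{r}') \leq \up^{n-1}(\rvec{r}') < \up^n(\rvec{r})$: you use monotonicity only for the \emph{write} index $\rvec{r}'$, and the displacing condition once, at iteration $n-1$.

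Your ordering is the more careful one, because the definition of monotonic loops only constrains indices of lvalues in $\vec{\ell}$. Consider $\mat{i \\ a[i]} \assign \mat{i+1 \\ a[i \cdot i+1]}$. The loop is monotonic, and $a[i\cdot i+1]$ is displacing because $i < (i+1)\cdot(i+1)+1$ is valid. Yet the paper's intermediate step $\up^{1}(i\cdot i+1) \leq \up^{10}(i \cdot i+1)$ fails for $i = -10$ (it reads $82 \leq 1$). Your chain still goes through: $-10 + k \leq -1 < 1$ for all $k < 10$.
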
\pagebreak[2]
\makeproof*{thm:displacing}{
  \displacing*
  \begin{proof}
    By \Cref{def:closed}, we need to show $x[\rvec{r}]^{(n)} = \up^n(x[\rvec{r}])$.
    We have:
    \usetagform{default}%
    \begin{align*}
      \up^n(x[\rvec{r}]) & {} = \up^n(x)[\up^n(\rvec{r})] \\
                         & {} = x[\up^n(\rvec{r})] \tag{by \Cref{lem:displacing}} \\
                         & {} = x[\rvec{r}^{(n)}] \tag{by \Cref{def:closed}} \\
                         & {} = x[\rvec{r}]^{(n)}
    \end{align*}
    \usetagform{box}%
    \qed
  \end{proof}  
}
\begin{example}
  \label{ex:displacing}
  Reconsider \ref{leadLoop}, where the only displacing lvalue is $a[i+1]$.
  Indeed,
  \[
    \up^n(a[i+1]) = a[(i+1)^{(n)}] = a[i+n+1]
  \]
  holds for all $n \in \NN$, i.e., the value that is referenced by the lvalue $a[i+1]$ after $n$ iterations is equal to the value that is referenced by $a[i+n+1]$ initially.
\end{example}
For inductive lvalues, a-solvable loops give rise to systems of recurrence equations.\footnote{Our definition of recurrence equations is less general than usual, as we only need recurrences of order $1$, and the initial conditions are always the same in our setting.}

\begin{definition}[Recurrence Equations]
  \label{def:rec-eqs}
  Let $\RR$ be a set of symbols with $n \notin \RR$.
  A \emph{recurrence equation} (over $\RR$) is an equation $\rec' = e$ where $\rec \in \RR$ and $e$ is an arithmetic expression over $\RR$.
  A substitution $\theta$ which maps symbols from $\RR$ to arithmetic
  expressions over $\RR \cup \{n\}$ is a \emph{solution} for $\rec'=e$ if
  \[
    \theta(\rec)[n/n+1] = \theta(e) \qquad \text{and} \qquad \theta(\rec)[n/0] = \rec
  \]
  are valid.
  A set of recurrence equations is also called a \emph{system}, and $\theta$ is a solution for such a system if it is a solution for each of its elements.
\end{definition}
\begin{example}
  Consider the recurrences
  $i' = i + 1$ and $j' = j + i$.
  Then $\theta$ with $\theta(i) = i + n$ and $\theta(j) = j + \tfrac{1}{2} \cdot n^2 + n \cdot i - \tfrac{1}{2} \cdot n$
  is a solution, as we have:
  \begin{align*}
    \theta(i)[n/n+1] & {} = i+n+1 = \theta(i)+1 = \theta(i+1) \qquad\qquad
    \theta(i)[n/0] = i+0 = i\\
    \theta(j)[n/n+1] & {} = j + \tfrac{1}{2} \cdot n^2+ \tfrac{1}{2} \cdot n + n \cdot i + i
                      = \theta(j) + n + i = \theta(j) + \theta(i) = \theta(j+i)\\
    \theta(j)[n/0] & {} = j + \tfrac{1}{2} \cdot 0^2 + 0 \cdot i - \tfrac{1}{2} \cdot 0 = j
  \end{align*}
\end{example}
To construct such systems, we apply the \emph{lvalue-substitution}
$\sigma_{\rec} \Def [\ell / \rec_{\ell} \mid \ell \in \Lval]$, which replaces every
top-level
occurrence of an lvalue $\ell$ by a fresh symbol $\rec_{\ell}$.
Now we define the recurrences that yield closed forms for inductive lvalues.

\begin{definition}[$\Rec$]
  \label{def:rec}
  $\Rec$ is the system of recurrence equations that contains $\rec_{x[\rvec{r}]}' = r\sigma_\rec$ where $r = \rhs(x[\up(\rvec{r})])$ for each inductive $x[\rvec{r}] \in \LL$.
\end{definition}
So $\Rec$ is a system of recurrences over $\{\rec_{\ell} \mid \ell \in \LL\}$.
As the lvalues in $\rvec{r}$ are updated simultaneously, the value of $x[\rvec{r}]$ in the \emph{next} iteration corresponds to the value that is written to $x[\up(\rvec{r})]$ in the current iteration, so we have $r = \rhs(x[\up(\rvec{r})])$.
This is in line with \Cref{def:solvable}, which ensures $x[\up(\rvec{r})] \in \vec{\ell} = \dom(\rhs)$.

\begin{example}[$\Rec$]
  \label{ex:rec}
  For \ref{leadLoop}, we get $\Rec =\{\rec'_i = \rec_i + 1, \rec_{a[i]}' = \rec_{a[i]}\}$.
  Note that we indeed have $\rhs(i) = i+1$ and
  $\rhs(a[\up(i)]) = \rhs(a[i+1]) = a[i]$.
  Then $\theta$ with $\theta(\rec_i) = \rec_i + n$ and $\theta(\rec_{a[i]}) = \rec_{a[i]}$ is a solution of $\Rec$, as:
  \begin{align*}
    \theta(\rec_i)[n/n+1] & {} = (\rec_i + n)[n/n+1] = \rec_i + n + 1 = \theta(\rec_i + 1) \\
    \theta(\rec_{a[i]})[n/n+1] & {} = \rec_{a[i]}[n/n+1] = \rec_{a[i]} = \theta(\rec_{a[i]}) \\
    \theta(\rec_i)[n/0] & {} = (\rec_i + n)[n/0] = \rec_i \quad\qquad \theta(\rec_{a[i]})[n/0] = \rec_{a[i]}[n/0] = \rec_{a[i]}
  \end{align*}
\end{example}
To solve $\Rec$, standard techniques for recurrence solving can be used (see,
e.g.~\cite{purrs}).
While these techniques are very powerful, there are of course cases where $\Rec$ cannot be solved, so that our approach cannot accelerate \ref{loop}.
In the sequel, we assume that $\Rec$ can be solved.
For inductive lvalues $\ell \in \LL$, the following theorem shows that a solution of $\Rec$ allows us to construct a closed form.
\begin{restatable}[Closed Forms for Inductive Lvalues]{theorem}{inductive}
  \label{thm:inductive}
  Let \ref{loop} be a-solvable and let $\theta$ be a solution for $\Rec$.
  Then for each inductive $\ell \in \LL$,
  $
    \ell^{(n)} \Def \theta(\rec_{\ell})\sigma^{-1}_\rec
  $
  is a closed form for $\ell$.
  Here, $\sigma^{-1}_\rec$ is the inverse of $\sigma_\rec$.
\end{restatable}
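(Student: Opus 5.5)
We prove by induction on $n$ that $\up^n(\ell) = \theta(\rec_\ell)\sigma^{-1}_\rec$ for every inductive $\ell \in \LL$. To make the induction go through, we strengthen it to all lvalues $\ell \in \LL$ that occur in right-hand sides of $\Rec$. For such $\ell$ we use the candidate closed forms $\ell^{(n)}$: the expression $\theta(\rec_\ell)\sigma^{-1}_\rec$ if $\ell$ is inductive, and $\ell$ itself if $\ell$ is trivial (\Cref{lem:trivial}). This covers everything, because a-solvability guarantees that when $x[\rvec{r}]$ is inductive, its equation $\rec'_{x[\rvec{r}]} = r\sigma_\rec$ has $r = \rhs(x[\up(\rvec{r})])$, and $r$ is some $r_i$ whose lvalues all fall under case \ref{it:rec}. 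Hence $\Lval(r)$ consists only of trivial or inductive lvalues, and $\Rec$ is closed over these symbols. A trivial $\ell$ is not changed by $\up$, and $\theta$ can be taken to fix $\rec_\ell$ (or, if it does not, the claim holds for those symbols separately).

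\textbf{Base case $n=0$.} Here $\up^0(\ell) = \ell$, and by \Cref{def:rec-eqs} we have $\theta(\rec_\ell)[n/0] = \rec_\ell$. Applying $\sigma^{-1}_\rec$ gives $\ell$.

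\textbf{Step $n \to n+1$.} The key identity is $\up(x[\rvec{r}]) = \up(x)[\up(\rvec{r})] = \rhs(x[\up(\rvec{r})])$ for inductive $x[\rvec{r}]$. It holds because $x[\up(\rvec{r})] \in \vec{\ell}$, and by \eqref{eq:sem} together with \eqref{eq:distinct-indices}, reading $\up_x(\vec{\ell})$ at index $\up(\rvec{r})$ returns exactly that right-hand side, by the read-over-write axioms. So $\up(\ell) = r$, and therefore
\[
\up^{n+1}(\ell) = \up^n(\up(\ell)) = \up^n(r) = r[\ell'/\up^n(\ell') \mid \ell' \in \Lval(r)],
\]
since $\up^n$ is a substitution and commutes with the arithmetic operators at top level. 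By the induction hypothesis, $\up^n(\ell') = \theta(\rec_{\ell'})\sigma^{-1}_\rec$ for each $\ell' \in \Lval(r)$, so $\up^{n+1}(\ell) = \theta(r\sigma_\rec)\sigma^{-1}_\rec$. The solution property gives $\theta(r\sigma_\rec) = \theta(\rec_\ell)[n/n+1]$, and hence $\up^{n+1}(\ell) = \theta(\rec_\ell)\sigma^{-1}_\rec[n/n+1] = \ell^{(n+1)}$.

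\textbf{Main obstacle.} The delicate point is justifying $\up^n(x)[\up^n(\rvec{r})]$-style reasoning, i.e.\ showing that $\up^n$ distributes over $\Lval(r)$ as a substitution and that $\sigma_\rec$ and $\sigma^{-1}_\rec$ commute properly with $\theta$. In particular, nested lvalues inside indices must not be confused with the top-level lvalues that $\sigma_\rec$ replaces. I would handle this with a small auxiliary lemma stating that for any rvalue $r$ and substitution-like map $f$, $f(r) = (r\sigma_\rec)[\rec_{\ell'}/f(\ell') \mid \ell' \in \Lval(r)]$. This is proved by structural induction on the grammar of $\Rval$.
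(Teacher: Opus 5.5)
Your induction is the paper's own proof, run in the opposite direction: the paper goes from $\ell^{(m+1)}$ to $\up^{m+1}(\ell)$, you go from $\up^{n+1}(\ell)$ to $\ell^{(n+1)}$. The ingredients are the same: the base case from $\theta(\rec_\ell)[n/0]=\rec_\ell$, the read-over-write identity $\up(x)[\up(\rvec r)]=\rhs(x[\up(\rvec r)])$, pushing $\up^m$ through the top-level lvalues, the induction hypothesis, and the solution property.

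The gap is where the induction hypothesis is applied to every $\ell'\in\Lval(\rhs(x[\up(\rvec r)]))$. The paper does this without comment, and your justification for it is false. A-solvability only requires each $r_i$ to satisfy (a) \emph{or} (b); nothing forces the particular $r_i=\rhs(x[\up(\rvec r)])$ into case (a). Consider $\mWhile{i<k}{\mat{i\\a[i+1]\\b}\assign\mat{i+1\\a[i+2]\\a[i]}}$:
\begin{itemize}
\item $\LL=\{i,a[i],a[i+2]\}$, where $a[i]$ is inductive and $a[i+2]$ is displacing.
\item $r_2=a[i+2]$ falls under (b), and the loop is a-solvable.
\item Yet $\Rec=\{\rec_i'=\rec_i+1,\ \rec_{a[i]}'=\rec_{a[i+2]}\}$ mentions a displacing symbol.
\end{itemize}
Since $\rec_{a[i+2]}$ has no equation, $\theta(\rec_i)=\rec_i+n$, $\theta(\rec_{a[i]})=\theta(\rec_{a[i+2]})=\rec_{a[i]}$ is a solution. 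It yields $a[i]^{(n)}=a[i]$, although $\up(a[i])=a[i+2]$. In fact $\up^n(a[i])=a[i+n+1]$ for $n\geq 1$, which no solution can express. So ``$\rhs(x[\up(\rvec r)])$ satisfies (a) for every inductive $x[\rvec r]\in\LL$'' must be added as a hypothesis. It cannot be derived, and the theorem as stated fails without it.

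Your separate treatment of trivial lvalues is also needed, since the paper's ``IH'' does not cover them. However, ``$\theta$ can be taken to fix $\rec_\ell$'' is not available: $\theta$ is an arbitrary given solution. The fallback ``or, if it does not, the claim holds for those symbols separately'' is wrong. For $\mWhile{i<k}{i\assign i+k}$ we have $\Rec=\{\rec_i'=\rec_i+\rec_k\}$. Then $\theta(\rec_k)=\rec_k+n$, $\theta(\rec_i)=\rec_i+n\cdot\rec_k+\tfrac12\cdot n\cdot(n-1)$ is a solution in the sense of \Cref{def:rec-eqs}, and it gives $i^{(n)}\neq i+n\cdot k=\up^n(i)$. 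Hence $\theta(\rec_\ell)=\rec_\ell$ for trivial $\ell$ (equivalently, $\rec_\ell'=\rec_\ell\in\Rec$) must also be assumed. With these two hypotheses, your strengthened induction over trivial and inductive lvalues, together with the substitution lemma you sketch, is a complete proof.
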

\makeproof*{thm:inductive}{
  \inductive*
  \begin{proof}
    We use induction on $n$ to show $\ell^{(n)} = \up^n(\ell)$.
    \usetagform{default}%
    \begin{description}
    \item[Induction Base -- $n = 0$:]
      \begin{align*}
        & \ell^{(n)}[n/0] \\
        {} = {} & \theta(\rec_{\ell})\sigma^{-1}_\rec[n/0] \tag{by def.\ of $\ell^{(n)}$} \\
        {} = {} & \theta(\rec_{\ell})[n/0]\sigma^{-1}_\rec \tag{as $n \notin \dom(\sigma^{-1}_\rec) \cup \VV(\img(\sigma^{-1}_\rec))$}\\
        {} = {} & \sigma^{-1}_\rec(\rec_{\ell}) \tag{by \Cref{def:rec-eqs}} \\
        {} = {} & \ell \tag{by def.\ of $\sigma_\rec$} \\
        {} = {} & \up^0(\ell) \\
        {} = {} & \up^n(\ell) \tag{as $n=0$}
      \end{align*}
    \item[Induction Step -- $n = m+1$:]
      \begin{align*}
        & \ell^{(n)}[n/m+1] \\
        {} = {} & x[\rvec{r}]^{(n)}[n/m+1] \tag{where $\ell = x[\rvec{r}]$}\\
        {} = {} & \theta(\rec_{x[\rvec{r}]})\sigma^{-1}_\rec[n/m+1] \tag{by def.\ of $\ell^{(n)}$} \\
        {} = {} & \theta(\rec_{x[\rvec{r}]})[n/m+1]\sigma^{-1}_\rec \tag{as $n \notin \dom(\sigma^{-1}_\rec) \cup \VV(\img(\sigma^{-1}_\rec))$}\\
        {} = {} & \theta(\rhs(x[\up(\rvec{r})])\sigma_\rec)[n/m]\sigma^{-1}_\rec \tag{by \Cref{def:rec-eqs,def:rec}}\\
        {} = {} & \theta(\rhs(x[\up(\rvec{r})])\sigma_\rec)\sigma^{-1}_\rec[n/m] \tag{as $n \notin \dom(\sigma^{-1}_\rec) \cup \VV(\img(\sigma^{-1}_\rec))$} \\
        {} = {} & \rhs(x[\up(\rvec{r})])[\ell'/\theta(\rec_{\ell'})\sigma^{-1}_\rec[n/m] \mid \ell' \in \Lval] \tag{by def.\ of $\sigma_\rec$} \\
        {} = {} & \rhs(x[\up(\rvec{r})])[\ell'/\up^m(\ell') \mid \ell' \in \Lval] \tag{IH} \\
        {} = {} & \up^m(\rhs(x[\up(\rvec{r})])) \\
        {} = {} & \up^{m}(\up(x)[\up(\rvec{r})]) \tag{by \eqref{eq:sem}} \\
        {} = {} & \up^{m}(\up(x[\rvec{r}])) \\
        {} = {} & \up^{m+1}(x[\rvec{r}]) \\
        {} = {} & \up^{m+1}(\ell) \tag{as $\ell = x[\rvec{r}]$} 
      \end{align*}
    \end{description}
    \usetagform{box}%
    \qed
  \end{proof}
}
\begin{example}
  \label{ex:inductive}
  Continuing \Cref{ex:rec}, we get:
  \begin{align*}
    i^{(n)} & {} = \theta(\rec_i)\sigma^{-1}_\rec = (\rec_i+n)\sigma^{-1}_\rec = i+n & \hspace{0.75em}
    a[i]^{(n)} & {} = \theta(\rec_{a[i]})\sigma^{-1}_\rec = \rec_{a[i]}\sigma^{-1}_\rec = a[i]
  \end{align*}
  Indeed, we have $i+n = \up^n(i)$ and $a[i] = \up^n(a[i])$ for all $n \in \NN$, i.e., the value that is referenced by $a[i]$ after $n$ iterations is equal to the initial value of $a[i]$.
\end{example}
\Cref{lem:trivial}, \Cref{thm:displacing}, and \Cref{thm:inductive} give rise to an
algorithm for computing closed forms for \emph{all} lvalues in $\LL$ (as defined in
\Cref{def:solvable}) whenever  \ref{loop} is a-solvable:
\begin{enumerate}
\item compute closed forms for all trivial lvalues via \Cref{lem:trivial}
\item compute closed forms for all inductive lvalues via \Cref{thm:inductive}
\item while there is an lvalue whose closed form has not yet been computed
  \begin{enumerate}
  \item \label{step:pick} pick an $x[\rvec{r}] \in \LL$ where the closed forms for all $\ell \in \Lval(\rvec{r})$ are known
  \item compute the closed form for $x[\rvec{r}]$ as in \Cref{thm:displacing}
  \end{enumerate}
\end{enumerate}
For Step (\ref{step:pick}), if $x[\rvec{r}] \in \LL$ is displacing and the closed form for $\ell \in \Lval(\rvec{r})$ has not yet been computed, then $\ell$ is displacing, too, and thus one can consider $\ell$ instead.
As $\ell$ is a subterm of $x[\rvec{r}]$, this yields a terminating approach for picking~$x[\rvec{r}]$.

\section{Closed Forms for Arrays}
\label{sect:qe}

We now show how to compute closed forms for arrays, given closed forms for all relevant lvalues, i.e., for all $\ell \in \LL$ (as defined in \Cref{def:solvable}).
The following theorem is an important step towards closed forms for
arrays.
\begin{restatable}[Towards Closed Forms for Arrays]{lemma}{arrays}
  \label{thm:closed}
  Let \ref{loop} be a-solvable and let $x \in \VV$.
  For each $x[\rvec{r}] \in \vec{\ell}$,
    we define the following auxiliary predicates:
  \begin{align*}
    \written_{\rvec{r}}(m,\rvec{c}) & {} \iff \rvec{r}^{(m-1)} = \rvec{c} \\
    \notwritten_x(m,n,\rvec{c}) & {} \iff \bigwedge_{x[\rvec{r}] \in \vec{\ell}} \forall m' \in [m \twodots n].\ \neg \written_{\rvec{r}}(m',\rvec{c}) \\
    \lastwrite_{x[\rvec{r}]}(m,n,\rvec{c}) & {} \iff \written_{\rvec{r}}(m,\rvec{c}) \land \notwritten_{x}(m+1,n,\rvec{c})
  \end{align*}
  Then the following holds for all $x[\rvec{r}] \in \vec{\ell}$ and all $\rvec{c} \in
  \ZZ^{\arity(x)}$:
  \begin{equation}
    \tag{\protect{\ensuremath{\textsc{Quant}}}}
    \label{eq:fo}
    \forall m \in [1 \twodots n].\ \lastwrite_{x[\rvec{r}]}(m,n,\rvec{c}) \implies \up^n(x)[\rvec{c}] = \rhs(x[\rvec{r}])^{(m-1)}
  \end{equation}
\end{restatable}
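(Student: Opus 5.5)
The argument is an induction on the number of iterations after the last write: once cell $\rvec{c}$ of $x$ has been overwritten in iteration $m$, it keeps that value because later iterations never write to it. Fix $x[\rvec{r}] \in \vec{\ell}$, $\rvec{c}$, and $m \in [1 \twodots n]$ with $\lastwrite_{x[\rvec{r}]}(m,n,\rvec{c})$. The proof has two parts.

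\textbf{Part 1: the value right after the $m$-th iteration.} First we need $\up^{k}(x) = \up^{k-1}(\up_x(\vec{\ell}))$, in the sense that applying $\up$ $k$ times amounts to applying $\up$ once and then evaluating under $\up^{k-1}$. Concretely, $\up^{k}(x)[\rvec{c}] = \up^{k-1}(\up_x(\vec{\ell})[\rvec{c}])$, since substitutions compose and $\rvec{c}$ is a constant.

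Unfolding \eqref{eq:sem}, $\up_x(\vec{\ell})[\rvec{c}]$ is a nested if-then-else over the lvalues $x[\rvec{r}'] \in \vec{\ell}$. By \eqref{eq:distinct-indices}, at most one guard $\rvec{c} = \rvec{r}'$ can hold. Hence $\up_x(\vec{\ell})[\rvec{c}]$ equals $\rhs(x[\rvec{r}'])$ if $\rvec{c} = \rvec{r}'$ for some (unique) such $\rvec{r}'$, and $x[\rvec{c}]$ otherwise.

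Applying $\up^{m-1}$ turns the guard $\rvec{c} = \rvec{r}$ into $\rvec{c} = \up^{m-1}(\rvec{r}) = \rvec{r}^{(m-1)}$. This uses that closed forms for all lvalues in $\Lval(\rvec{r})$ exist (by \Cref{lem:trivial}, \Cref{thm:displacing}, and \Cref{thm:inductive}, since these lvalues lie in $\LL$), so $\up^{m-1}(\rvec{r}) = \rvec{r}^{(m-1)}$. This guard is exactly $\written_{\rvec{r}}(m,\rvec{c})$, which holds by assumption. Therefore
\[
\up^m(x)[\rvec{c}] = \up^{m-1}(\rhs(x[\rvec{r}])) = \rhs(x[\rvec{r}])^{(m-1)},
\]
where the last step uses that $\rhs(x[\rvec{r}]) = r_i$ is an rvalue whose top-level lvalues lie in $\Lval(\vec{r}) \subseteq \LL$. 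So $\up^{m-1}$ acts on $r_i$ by replacing each top-level lvalue with its closed form, which is the meaning of $\cdot^{(m-1)}$.

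\textbf{Part 2: the value is preserved up to iteration $n$.} We show by induction on $k \in [m \twodots n]$ that $\up^k(x)[\rvec{c}] = \up^m(x)[\rvec{c}]$. For the step from $k$ to $k+1 \le n$, the same unfolding gives $\up^{k+1}(x)[\rvec{c}] = \up^{k}(\up_x(\vec{\ell})[\rvec{c}])$. Every guard now has the form $\rvec{c} = \rvec{r}'^{(k)}$, i.e.\ $\written_{\rvec{r}'}(k+1,\rvec{c})$. Since $k+1 \in [m+1 \twodots n]$, the hypothesis $\notwritten_x(m+1,n,\rvec{c})$ makes all these guards false, so the else branch gives $\up^{k}(x[\rvec{c}]) = \up^k(x)[\rvec{c}]$. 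The induction hypothesis then finishes the step, and together with Part 1 this yields \eqref{eq:fo}.

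\textbf{Main obstacle.} The delicate part is the bookkeeping behind "$\up^{k}$ applied to an rvalue equals the closed-form substitution". This means checking that $\up^k(x[\rvec{r}]) = \up^k(x)[\up^k(\rvec{r})]$ and that, for indices built from lvalues in $\LL$, $\up^k$ coincides with replacing top-level lvalues by their closed forms. The closure condition in the definition of $\LL$ (\Cref{def:solvable}) makes this go through, but it requires a structural induction over rvalues, together with care that the bound variables $\rvec{i}$ in \eqref{eq:sem} do not clash. Monotonicity is not actually needed for this lemma beyond guaranteeing that the closed forms exist.
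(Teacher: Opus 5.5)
Your proof is correct and follows the paper's argument. The paper likewise uses $\notwritten_x(m+1,n,\rvec{c})$ to replace $\up^n(x)[\rvec{c}]$ by $\up^m(x)[\rvec{c}]$ (in one step, which you spell out as an induction on $k$), and then unfolds $\up^m(x) = \up^{m-1}(\up_x(\vec{\ell}))$, using \eqref{eq:distinct-indices} so that the case $\rvec{i} = \rvec{r}$ fires at $\rvec{c} = \rvec{r}^{(m-1)}$.

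One small inaccuracy: the lvalues of a write index $\rvec{r}$ need not lie in $\LL$, which is generated only from the right-hand sides. For example, $j \notin \LL$ for the a-solvable loop $\mWhile{i<k}{\mat{i\\j\\a[i,j]} \assign \mat{i+1\\k\\0}}$. So the existence of $\rvec{r}^{(m-1)}$ is a presupposition of the statement itself (it is used to define $\written$), not a consequence of \Cref{lem:trivial}, \Cref{thm:displacing}, and \Cref{thm:inductive}.
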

\makeproof*{thm:closed}{
  \arrays*
  \begin{proof}
    Let $\rvec{c} \in \ZZ^{\arity(x)}$ and $m \in [1 \twodots n]$ be arbitrary but fixed
    and assume that $\lastwrite_{x[\rvec{r}]}(m,n,\rvec{c})$ holds.
    Then $\notwritten_x(m+1,n,\rvec{c})$ implies
    \begin{equation}
      \label{eq:lastwrite}
      \hat{\rvec{r}}^{(m'-1)} \neq \rvec{c} \qquad \text{for all } m' \in (m,n]
\text{ and all }  x[\hat{\rvec{r}}] \in \vec{\ell}.
    \end{equation}
    and $\written_{\rvec{r}}(m,\rvec{c})$ implies $\rvec{r}^{(m-1)} = \rvec{c}$.
    Thus,
    \usetagform{default}%
    \begin{align*}
      & \up^n(x)[\rvec{c}] \\
      {} = {} & \up^{m}(x)[\rvec{c}] \tag{by \eqref{eq:lastwrite}} \\
      {} = {} & \left(\left(\up^{(m-1)} \circ [y/\up_y(\vec{\ell}) \mid y \in \VV]\right) (x)\right)[\rvec{c}] \tag{by def.\ of $\up$}\\
      {} = {} & \up^{(m-1)}(x[y/\up_y(\vec{\ell}) \mid y \in \VV])[\rvec{c}] \\
      {} = {} & \up^{(m-1)}(\up_x(\vec{\ell}))[\rvec{c}] \\
      {} = {} & \up^{(m-1)}(\lam{\rvec{i}}{\rvec{r}}{\rhs(x[\rvec{r}])}{\ldots})[\rvec{c}] \tag{$\dagger$}\\
      {} = {} & (\lam{\rvec{i}}{\up^{m-1}(\rvec{r})}{\up^{m-1}(\rhs(x[\rvec{r}]))}{\ldots})[\rvec{c}] \\
      {} = {} & (\lam{\rvec{i}}{\rvec{r}^{(m-1)}}{\rhs(x[\rvec{r}])^{(m-1)}}{\ldots})[\rvec{c}] \\
      {} = {} & (\lam{\rvec{i}}{\rvec{r}^{(m-1)}}{\rhs(x[\rvec{r}])^{(m-1)}}{\ldots})[\rvec{r}^{(m-1)}] \tag{as $\rvec{r}^{(m-1)} = \rvec{c}$} \\
      {} = {} &  \rhs(x[\rvec{r}])^{(m-1)}
    \end{align*}
    \usetagform{box}%
    For the step marked with $(\dagger)$, note that the order of the cases in the definition of $\up_x(\vec{\ell})$ is irrelevant due to \eqref{eq:distinct-indices}, and thus we may assume that the case $\rvec{i} = \rvec{r}$ is tested first without loss of generality.
    \qed
  \end{proof}
}

\noindent
Intuitively, $\written_{\rvec{r}}(m,\rvec{c})$ checks whether the write access $x[\rvec{r}] \gets \ldots$ modifies the cell $x[\rvec{c}]$ in the $m^{th}$ iteration.
The predicate $\notwritten_x(m,n,\rvec{c})$ is true if $x[\rvec{c}]$ does not
change
between the $m^{th}$ and the $n^{th}$ iteration.
The predicate $\lastwrite_{x[\rvec{r}]}(m,n,\rvec{c})$ holds if the last write access to
$x[\rvec{c}]$ (up until the $n^{th}$ iteration) takes place in the $m^{th}$ iteration via the assignment $x[\rvec{r}] \gets \ldots$.
Note that we have $x[\rvec{r}] \in \vec{\ell}$, so $\rhs(x[\rvec{r}])$ is well defined.
As the final value of $x[\rvec{c}]$ is written by $x[\rvec{r}] \gets \rhs(x[\rvec{r}])$ in the $m^{th}$ iteration, the final value of $x[\rvec{c}]$ is equal to the value of the right-hand side after the $(m-1)^{th}$ iteration, i.e., $\rhs(x[\rvec{r}])^{(m-1)}$.
\begin{example}[Towards Closed Forms for Arrays]
  \label{ex:qf}
  Continuing \Cref{ex:inductive}, we obtain the following definitions for $\written$ and
  $\notwritten$ due to the \pagebreak[3] lvalues $a[i]$ and $a[i+1]$:
  \begin{align*}
    \written_{i}(m,c) & {} \iff i + m - 1 = c  \qquad\qquad \written_{i+1}(m,c) \iff i + m = c \\
    \notwritten_a(m,n,c) & {} \iff \forall m' \in [m \twodots n].\ \neg\written_{i}(m',c) \land
    \neg\written_{i+1}(m',c) 
  \end{align*}
  Here, we combined two universal quantifiers to
  simplify $\notwritten$.
  The definition of\linebreak $\lastwrite$ is as in \Cref{thm:closed}.
  Then by \Cref{thm:closed},  for all $c \in \ZZ$ we have:
  \begin{equation}
    \label{eq:with-quantifiers}
    \begin{aligned}
      \forall m \in [1 \twodots n].\ \lastwrite_{a[i]}(m,n,c) & {} \implies \up^n(a)[c] =
      (a[i+1])^{(m-1)} = 
      a[i+m] \\
      \forall m \in [1 \twodots n].\ \lastwrite_{a[i+1]}(m,n,c) & {} \implies \up^n(a)[c]
      = (a[i])^{(m-1)}        
      = a[i] 
    \end{aligned}
  \end{equation}
\end{example}
\Cref{thm:closed} indicates that a closed form for $x$ might have the form
\begin{align*}
  x^{(n)} \Def \lambda \rvec{c}.\ & \mIf{1 \leq m \leq n \land \lastwrite_{x[\rvec{r}_1]}(m,n,\rvec{c})}{\rhs(x[\rvec{r}_1])^{(m-1)}}\ \mathbf{else\ if}\ \ldots \\
  \mElse\ &\mIf{1 \leq m \leq n \land\lastwrite_{x[\rvec{r}_{k}]}(m,n,\rvec{c})}{\rhs(x[\rvec{r}_{k}])^{(m-1)}} \ \mElse\ x[\rvec{c}]
\end{align*}
where the value of $m$ is uniquely determined by $\lastwrite$.
However, due to the additional free variable $m$, such an expression is not a closed form in the sense of \Cref{def:closed-var}, and thus it is not suitable for acceleration.
Moreover, $\lastwrite$ is defined in terms of $\notwritten$, and the latter contains universal quantifiers, which complicates subsequent automated reasoning.

Thus, we now show how to eliminate both the variable $m$ and the universal quantifiers in $\notwritten$ in the important special case that for every $x[\rvec{r}] \in \vec{\ell}$,
$\up(\rvec{r}) - \rvec{r} = \rvec{d}$ is a vector of
constants (which may differ for each $x[\rvec{r}] \in \vec{\ell}$).
Note that if $\rvec{d}$ is a vector of constants, then we have $\rvec{r}^{(n)} = \rvec{r} + \rvec{d} \cdot n$.

\subsubsection{Eliminating the Quantifier from $\notwritten$:}

For $\notwritten$, we get:
\begin{align*}
  & \forall m' \in [m \twodots n].\ \neg\written_{\rvec{r}}(m',\rvec{c}) \\
  {} \iff {} & \forall m' \in [m \twodots n].\ \rvec{r}^{(m'-1)} \neq \rvec{c} & \hspace{-6em}(\text{by def.\ of $\written$})\\
  {} \iff {} & \forall m' \in [m \twodots n].\ \rvec{r} + \rvec{d} \cdot (m'-1) \neq \rvec{c} & \hspace{-6em}(\text{as $\rvec{r}^{(n)} = \rvec{r} + \rvec{d} \cdot n$})\\
  {} \iff {} & \forall m' \in [m \twodots n].\ \bigvee_{\mathclap{i \in [1 \twodots |\rvec{r}|]}} r_i + d_i \cdot (m'-1) \neq c_i \\
  {} \iff {} &  \bigvee_{\mathclap{\substack{i \in [1 \twodots |\rvec{r}|]\\d_i \geq 0}}}
  c_i < r_i + d_i \cdot (m-1) \lor r_i + d_i \cdot (n-1) < c_i \quad \lor {} \\
  & \bigvee_{\mathclap{\substack{i \in [1 \twodots |\rvec{r}|]\\d_i<0}}} c_i < r_i + d_i
  \cdot (n-1) \lor r_i + d_i \cdot (m-1) < c_i \quad \lor \quad \bigvee_{\mathclap{\substack{i \in [1 \twodots |\rvec{r}|]\\|d_i| > 1}}} d_i \notdivides (c_i-r_i) \tag{qe} \label{eq:qe}
\end{align*}
Again, ``$\divides$'' denotes divisibility, so $d_i \notdivides (c_i-r_i)$ is true iff $d_i$ does not divide $c_i-r_i$.
In the last step, the first two sub-formulas $\bigvee \ldots$ cover the case that
\begin{equation}
  \label{eq:oob}
  c_i \notin \left[\min_{m'\in[m,n]}(r_i+d_i\cdot(m'-1)),\max_{m'\in[m,n]}(r_i+d_i\cdot(m'-1))\right],
\end{equation}
which immediately implies $r_i + d_i \cdot (m'-1) \neq c_i$.
Here, we consider the cases $d_i \geq 0$ (first $\bigvee \ldots$) and $d_i<0$ (second $\bigvee \ldots$) separately.
If \eqref{eq:oob} does not hold, then we have $\forall m' \in [m \twodots n].\ r_i+d_i\cdot(m'-1) \neq c_i$ iff $c_i-r_i$ is not divisible by $d_i$.

\subsubsection{Eliminating the Quantifier from \normalfont{\eqref{eq:fo}}:}

Next, we eliminate $m$ from the negated matrix of \eqref{eq:fo}, i.e., from
\begin{equation}
  \label{eq:negquant}
  \tag{\protect{\ensuremath{!\textsc{Quant}}}}
  m \in [1 \twodots n] \land \lastwrite_{x[\rvec{r}]}(m,n,\rvec{c}) \land \up^n(x)[\rvec{c}] \neq \rhs(x[\rvec{r}])^{(m-1)},
\end{equation}
by deriving an instantiation $e_{x[\rvec{r}]}$ so that
$\exists m.\, \eqref{eq:negquant} \iff \eqref{eq:negquant}[m/e_{x[\rvec{r}]}]$ and thus $\eqref{eq:fo} \iff \neg\,\eqref{eq:negquant}[m/e_{x[\rvec{r}]}]$ holds.
We get:
\usetagform{default}%
\begin{align*}
  \lastwrite_{x[\rvec{r}]}(m,n,\rvec{c}) \iff {} & \written_{\rvec{r}}(m,\rvec{c}) \land \ldots \tag{by def.\ of $\lastwrite$}\\
  {} \iff {} & \rvec{r}^{(m-1)} = \rvec{c} \land \ldots \tag{by def.\ of $\written$} \\
  {} \iff {} & \rvec{r} + \rvec{d} \cdot (m-1) = \rvec{c} \land \ldots \tag{by def.\ of $\rvec{r}^{(n)}$} \\
  {} \iff {} & \bigwedge_{\mathclap{\substack{i \in [1 \twodots |\rvec{r}|]\\d_i
  = 0}}} r_i = c_i \land \bigwedge_{\mathclap{\substack{i \in [1 \twodots |\rvec{r}|]\\d_i
  \neq 0}}} \; d_i {\divides} (c_i-r_i) \land m = (c_i-r_i) \div d_i + 1 \land \ldots
\end{align*}
\usetagform{box}%
The last step results from solving the equation $\rvec{r} + \rvec{d} \cdot (m-1) = \rvec{c}$ for $m$.
More explicitly, this equation means $\bigwedge_{i=1}^{|\rvec{r}|} r_i + d_i \cdot (m-1) =
c_i$. So for each conjunct, we obtain $m = \frac{c_i-r_i}{d_i}+1$ if $d_i \neq 0$, or $r_i=c_i$ if $d_i=0$.
As $m$ is an integer, the former is equivalent to 
  $m = (c_i-r_i) \div d_i + 1 \land d_i {\divides} (c_i-r_i)$.
So if there is a $d_i \neq 0$, then $m$ can be eliminated by propagating the equality $m=(c_i-r_i) \div d_i + 1$.

If \emph{all} elements of $\rvec{d}$ are $0$, then $\rvec{r}$ does not change at all,  and we obtain:
\usetagform{default}%
\begin{align*}
  \lastwrite_{x[\rvec{r}]}(m,n,\rvec{c}) \iff {} & \written_{\rvec{r}}(m,\rvec{c}) \land \notwritten_x(m+1,n,\rvec{c}) \tag{def.\ of $\lastwrite$}\\
  {} \iff {} & \written_{\rvec{r}}(m,\rvec{c}) \land m \geq n
\end{align*}
\usetagform{box}%
The reason for the last step is that, in this case, the write access $x[\rvec{r}] \gets \ldots$ overwrites the same cell in every iteration.
So $\written_{\rvec{r}}(m,\rvec{c})$ implies $\written_{\rvec{r}}(m+1,\rvec{c})$, which contradicts $\notwritten_x(m+1,n,\rvec{c})$, unless we have $m+1>n$.
In the latter case, $\notwritten_x(m+1,n,\rvec{c})$ is trivially true.
As $m$ ranges over $[1,n]$ in \eqref{eq:negquant}, $m + 1 > n$ implies $m=n$, so we can
instantiate $m$ with $n$ if~$\rvec{d} = [0,\ldots,0]$.

Hence, if the indices of all array accesses only change by constants in every iteration, then all quantifiers that occur in \Cref{thm:closed} can be eliminated.

\begin{example}[Quantifier Elimination]
  \label{ex:qe}
  For \ref{leadLoop}, we have $\rvec{d} = 1$ for both $a[i]$ and $a[i+1]$.
  As explained above, we can simplify $\notwritten_a(m,n,c)$ from \Cref{ex:qf}:
  \begin{align}
    \label{eq:unchanged-qe}
    (c < i + m - 1 \lor i + n - 1 < c) \land (c < i + m \lor i + n < c)
  \end{align}
  Moreover, the final formulas \eqref{eq:with-quantifiers} from \Cref{ex:qf} can be simplified to:
  \begin{equation}
    \label{eq:without-quantifiers}
    \begin{aligned}
      1 \leq c-i+1 \leq n \land \lastwrite_{a[i]}(c-i+1,n,c) & {} \implies \up^n(a)[c] = a[c+1] \\
      1 \leq c-i \leq n \land \lastwrite_{a[i+1]}(c-i,n,c) & {} \implies \up^n(a)[c] = a[i] 
    \end{aligned}
  \end{equation}
  Here, we propagated the equality $m=c-i+1$ (implied by $\lastwrite_{a[i]}(m,n,c)$) in the first line, and the equality $m=c-i$ (implied by $\lastwrite_{a[i+1]}(m,n,c)$) in the second line.
  For the first occurrence of $\lastwrite$ in \eqref{eq:without-quantifiers}, we get:
  \usetagform{default}%
  \begin{align*}
    &\lastwrite_{a[i]}(c-i+1,n,c) \\
    {} \iff {} & i^{(c-i)}=c \land \notwritten_a(c-i+2,n,c) \tag{by def.\ of $\lastwrite$ and $\written$} \\[-0.5em]
    {} \iff {} & i+c-i=c \land \notwritten_a(c-i+2,n,c) \iff \notwritten_a(c-i+2,n,c) \overset{\scriptsizeeqref{eq:unchanged-qe}}{\iff} \true
  \end{align*}
  \usetagform{box}%
  Similarly, for the second occurrence of $\lastwrite$ in \eqref{eq:without-quantifiers}, we get:
  \vspace{-0.3em}
  \begin{align*}
    \lastwrite_{a[i+1]}(c-i,n,c) \iff \notwritten_a(c-i+1,n,c) \overset{\scriptsizeeqref{eq:unchanged-qe}}{\iff} i + n - 1 < c
  \end{align*}
  Thus, \eqref{eq:without-quantifiers} simplifies to:
  \begin{equation}
    \label{eq:closed-lambda}
    \begin{aligned}
      0 \leq c-i < n \implies \up^n(a)[c] = a[c+1] \qquad\quad 1 \leq c-i = n \implies \up^n(a)[c] = a[i]
    \end{aligned}
  \end{equation}
  The left implication states that $a[i+1],\ldots,a[i+n]$ are shifted to the left
  by one position, and the right implication states that $a[i]$ gets moved to index $i+n$.
\end{example}

\subsubsection{Putting it all together:}

Now we can finally compute closed forms for arrays.
Recall that the quantified variable $m$ from \eqref{eq:fo} was eliminated via instantiation.
For each $x[\rvec{r}] \in \vec{\ell}$, let $e_{x[\rvec{r}]}$ be the corresponding instantiation of $m$, i.e., the expression such that
\begin{align*}
  \forall m \in [1 \twodots n].\ \lastwrite_{x[\rvec{r}]}(m,n,\rvec{c}) & {} \implies \up^n(x)[\rvec{c}] = \rhs(x[\rvec{r}])^{(m-1)} \\
  \text{iff} \quad 1 \leq e_{x[\rvec{r}]} \leq n \land \lastwrite_{x[\rvec{r}]}(e_{x[\rvec{r}]},n,\rvec{c}) & {} \implies \up^n(x)[\rvec{c}] = \rhs(x[\rvec{r}])^{(e_{x[\rvec{r}]}-1)}.
\end{align*}
Then we obtain our main theorem:
\begin{restatable}[Closed Forms for Arrays]{theorem}{arraythm}
  \label{thm:closed-arrays}
  Let $x \in \VV$,
  let \ref{loop} be a-solvable, assume that $\Rec$ can be solved, and let $x[\rvec{r}_1],\ldots,x[\rvec{r}_k]$ be all elements of $\vec{\ell}$ of the form $x[\ldots]$.
  If $\up(\rvec{r}_i) - \rvec{r}_i$ is a vector of constants for every $i \in [1 \twodots k]$, then
  \begin{equation}
    \hspace*{-.2cm}\begin{aligned}
      &x^{(n)} \Def \lambda \rvec{c}. \\
      &\phantom{\mElse}\ \mIf{1 \leq e_{x[\rvec{r}_1]} \leq n \land \lastwrite_{x[\rvec{r}_1]}(e_{x[\rvec{r}_1]},n,\rvec{c})}{\rhs(x[\rvec{r}_1])^{(e_{x[\rvec{r}_1]}-1)}} && \!\!\!\!\mathbf{else\ if}\ldots \\
      &\mElse\ \mIf{1 \leq e_{x[\rvec{r}_k]} \leq n \land\lastwrite_{x[\rvec{r}_{k}]}(e_{x[\rvec{r}_k]},n,\rvec{c})}{\rhs(x[\rvec{r}_{k}])^{(e_{x[\rvec{r}_k]}-1)}} && \!\!\!\!\mElse\ x[\rvec{c}]
    \end{aligned}
  \end{equation}
  is a closed form for $x$.
\end{restatable}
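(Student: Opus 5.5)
We have to show $x^{(n)} = \up^n(x)$ for all $n \in \NN$. Since both sides are arrays, we argue pointwise: fix $n$ and $\rvec{c} \in \ZZ^{\arity(x)}$, and show $x^{(n)}[\rvec{c}] = \up^n(x)[\rvec{c}]$. The argument splits according to whether the cell $x[\rvec{c}]$ is written at some iteration $m \in [1 \twodots n]$.

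\emph{Case 1: the cell is written.} Suppose $\written_{\rvec{r}_i}(m,\rvec{c})$ holds for some $i \in [1 \twodots k]$ and $m \in [1 \twodots n]$. Let $m$ be the maximal such iteration over all $i$, and let $x[\rvec{r}_j]$ be a write access witnessing it. By \eqref{eq:distinct-indices}, the different lvalues $x[\rvec{r}_i]$ cannot hit the same cell in the same iteration (applied to the state after $m-1$ iterations). Hence $j$ is unique, and $\lastwrite_{x[\rvec{r}_j]}(m,n,\rvec{c})$ holds. Moreover, $\lastwrite_{x[\rvec{r}_{i}]}(m'',n,\rvec{c})$ fails for every other pair $(i,m'')$ with $m'' \in [1 \twodots n]$: either $m'' < m$, which contradicts $\notwritten$, or $m''=m$ with $i\neq j$, which contradicts uniqueness.

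We now use the quantifier-elimination facts derived above the theorem. Because each $\rvec{d}_i = \up(\rvec{r}_i)-\rvec{r}_i$ is constant, we have $\rvec{r}_i^{(m)} = \rvec{r}_i + \rvec{d}_i m$, and $e_{x[\rvec{r}_i]}$ is chosen so that the instantiated condition $1 \le e \le n \land \lastwrite(e,n,\rvec{c})$ is equivalent to $\exists m \in [1 \twodots n].\ \lastwrite(m,n,\rvec{c})$, with $e=m$ at the witness. Concretely, if some $d_{i,l} \neq 0$, then $\written$ forces $m = (c_l-r_l)\div d_l+1$; if $\rvec{d}_i = \rvec{0}$, then $\lastwrite$ forces $m=n$. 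Consequently, exactly the $j$-th branch condition of the nested if-then-else is true, and all earlier branch conditions are false. So $x^{(n)}[\rvec{c}] = \rhs(x[\rvec{r}_j])^{(m-1)}$, and \Cref{thm:closed} gives that this equals $\up^n(x)[\rvec{c}]$. (If uniqueness turned out to be unnecessary, the order of branches would not matter anyway, since \Cref{thm:closed} makes every branch with a true guard yield the same value.)

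\emph{Case 2: the cell is never written.} Suppose no $\written_{\rvec{r}_i}(m,\rvec{c})$ holds for $m\in[1 \twodots n]$. Then no branch guard is true, so $x^{(n)}[\rvec{c}] = x[\rvec{c}]$. We show $\up^n(x)[\rvec{c}] = x[\rvec{c}]$ by induction on $n$, unfolding $\up^{n}(x) = \up^{n-1}(\up_x(\vec{\ell}))$ as in the proof of \Cref{thm:closed}. Each step applies read-over-write with all guards $\rvec{r}_i^{(m-1)} \neq \rvec{c}$ false, so the cell passes through unchanged; this is the same calculation as in \Cref{thm:closed}, taking the else-branch instead.

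The main obstacle is making rigorous that the syntactic instantiation $e_{x[\rvec{r}_i]}$ and the quantifier-free version of $\notwritten$ are genuinely equivalent to the quantified predicates. That is, we must check that the derivation \eqref{eq:qe} and the elimination of $m$ are sound for all $\rvec{c}$, including the divisibility side conditions and the $\rvec{d}=\rvec{0}$ case. I would isolate this as an auxiliary claim, proved from the computations preceding the theorem, and then the case analysis above finishes the proof.
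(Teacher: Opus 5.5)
Your proof is correct and takes essentially the same route as the paper. The paper also fixes $\rvec{c}$ and splits on whether some branch guard holds (equivalently, by your maximal-iteration argument, whether the cell $x[\rvec{c}]$ is written at all); it settles the first case with \Cref{thm:closed} plus correctness of the quantifier elimination, and the second by observing that the cell is never overwritten. Your explicit maximality argument and your induction for the unwritten case spell out steps the paper leaves implicit, and, as you note yourself, the uniqueness argument via \eqref{eq:distinct-indices} is not needed.
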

\makeproof*{thm:closed-arrays}{
  \arraythm*
  \begin{proof}
    Let $\rvec{c} \in \ZZ^{\arity(x)}$ be arbitrary but fixed.
    If there is an $i \in [1 \twodots k]$ such that $1 \leq e_{x[\rvec{r}_i]} \leq n \land
    \lastwrite_{x[\rvec{r}_{i}]}(e_{x[\rvec{r}_i]},n,\rvec{c})$, then $x^{(n)}[\rvec{c}] =
    \up^n(x)[\rvec{c}]$ follows from \Cref{thm:closed} and correctness of our quantifier
    elimination technique. 
    Otherwise, by correctness of our quantifier elimination technique, we have
    \[
      \forall m \in [1 \twodots n].\ \neg \lastwrite_{x[\rvec{r}_i]}(m,n,\rvec{c})
    \]
    for all $i \in [1 \twodots k]$.
    By definition of $\lastwrite$, this implies
    \[
      \forall m \in [1 \twodots n].\ \rvec{r}_i^{(m-1)} \neq \rvec{c}.
    \]
    Thus, the cell $x[\rvec{c}]$ is not written by the loop, i.e., we have $\up^n(x)[\rvec{c}] = x[\rvec{c}] = x^{(n)}[\rvec{c}]$. \qed
  \end{proof}
}
So the closed form $x^{(n)}$ is the array that maps every index $\rvec{c}$ to the updated
right-hand side $\rhs(x[\rvec{r}_i])^{(e_{x[\rvec{r}_i]}-1)}$ (which is constructed using
the closed forms for lvalues from \Cref{sec:closed}),
where the last write access to the cell $x[\rvec{c}]$ took place in the
${e_{x[\rvec{r}_i]}}^{th}$ iteration via an assignment to $x[\rvec{r}_i]$.
As shown when deriving \eqref{eq:qe}, all quantifiers can be eliminated from the definition of $\notwritten$ (and thus $\lastwrite$), so \Cref{thm:closed-arrays} gives rise to \emph{quantifier-free} closed forms for arrays.
\begin{example}
  In \Cref{ex:qe}, we instantiated $m$ with $e_{a[i]} \Def c-i+1$ and $e_{a[i+1]} \Def c-i$, respectively.
  Thus, similar to \eqref{eq:without-quantifiers}, we obtain the closed form
  \[
    \begin{aligned}
    a^{(n)} = \lambda c.\ & \mIf{1 \leq c-i+1 \leq n \land \lastwrite_{a[i]}(c-i+1,n,c)}{a[c+1]} && \mElse\\
    & \mIf{1 \leq c-i \leq n \land \lastwrite_{a[i+1]}(c-i,n,c)}{a[i]} && \mElse\ a[c]
    \end{aligned}
    \]
since $(a[i+1])^{(c - i + 1 - 1)} = a[c+1]$ and $(a[i])^{(c - i - 1)} = a[i]$. By eliminating all remaining quantifiers and applying simplifications, analogously to \eqref{eq:closed-lambda}, we obtain:
\begin{align}
  \label{eq:finalClosedForm}
    a^{(n)} = \lambda c.\ \mIf{0 \leq c-i < n}{a[c+1]}\ \mElse\ \mIf{1 \leq c-i = n}{a[i]}\ \mElse\ a[c]
  \end{align}
\end{example}
At this point, we obtained closed form expressions in the sense of \Cref{def:closed-var}, i.e., they are suitable for acceleration.
As they are quantifier-free, they are also suitable for automation via SMT by handling
$\lambda$s with \emph{lemmas on demand}, see App.\ \ref{sec:lambdas}.

\section{Related Work}
\label{sec:related}

Acceleration techniques for arrays
were already studied in \cite{aligators,underapprox15,alberti13}, where \cite{aligators} is\linebreak restricted to loops where arrays are partitioned into read- and write-only arrays.

The technique from \cite{underapprox15} does not require such a partitioning, but if the
same array cell is written more than once (as in \ref{leadLoop}, where $a[i] \gets \ldots$
overwrites a cell that has
\pagebreak[3]
already been written by $a[i+1] \gets \ldots$ in the previous iteration), then \cite{underapprox15} approximates.
Moreover, to compute a closed form for an array $a$, \cite{underapprox15} requires that the closed forms for right-hand sides of updates $a[i] \gets \ldots$ are already known.
This means that $a$ itself must not occur on the right-hand side.

In \cite{alberti13}, it is shown how to accelerate \emph{local ground assignments}.
Their definition imposes certain restrictions that are not required in our setting.
In particular, the loop body must only update each array $a$ at a single index $i$.
So their approach does not apply to, e.g., \ref{leadLoop}.
In contrast to \cite{underapprox15},
restricted forms of occurrences of 
the array $a$ are possible on the right-hand side of $a[i] \gets \ldots$

A similar approach to our elimination of the quantifiers from $\notwritten$ in \Cref{sect:qe} is also used in \cite{alberti13}, but our technique for eliminating the quantifier from \eqref{eq:fo} is specific to our approach.
Moreover, \cite{alberti13} also uses $\lambda$s for the result of acceleration, but only to simplify the presentation.
In their implementation, they use quantifiers instead, whereas we use SMT solving with $\lambda$s.

Subsequently, the results from \cite{alberti13} have been used in \cite{albertiJAR15} to show decidability of safety for a certain class of array programs.
In an orthogonal line of research, \cite{alberti15} describes a framework to combine acceleration techniques.

\section{Implementation and Evaluation}
\label{sec:eval}

We implemented our approach in \tool{LoAT}~\cite{loat}, where we use it in our model checking algorithm ABMC~\cite{abmc}, which integrates acceleration into Bounded Model Checking (BMC).
ABMC lifts acceleration from single-path loops to systems with complex control flow, i.e., it also supports multi-path loops, nested loops, etc.
\tool{LoAT} uses our SMT solver \tool{SwInE}~\cite{swine}, which is built on top of \tool{Z3}~\cite{z3}.
The input format of \tool{LoAT} are Constrained Horn Clauses (CHCs)~\cite{chc-comp-format}, where (un)safe verification tasks correspond to (un)satisfiable CHCs.

We used our novel tool {\sf HornKlaus}~\cite{hornklaus} to transform the {\sf C}
benchmarks from the category {\tt ReachSafety-Arrays} of the Software Verification
Competition (SV-COMP \cite{SV-COMP}) to CHCs.
{\sf HornKlaus} does not (yet) aim to preserve (un)safety in all cases, as it does not take all intricacies of the {\sf C} standard into account.
Instead, its purpose is to generate CHCs that mirror the given \emph{algorithm} as closely as possible.
Thus, {\sf HornKlaus} cannot yet be used
to compare
  CHC solvers with {\sf C} verifiers, but it has certain advantages over other tools that convert {\sf C} to CHCs:
{\sf SeaHorn}~\cite{seahorn} converts multi-dimensional arrays to one-dimensional arrays, resulting in a larger gap between the {\sf C} program and the CHCs, and {\sf Korn}~\cite{korn} often yields non-linear CHCs (which are not yet supported by {\sf LoAT}) when {\sf HornKlaus} does not.

Currently, {\sf HornKlaus} supports a fragment of {\sf C} that corresponds to the CHCs
handled by {\sf LoAT} (e.g., it does not support pointers, structs, or bit-wise operations).
Thus, it can only transform 201 of the 439 SV-COMP benchmarks into CHCs, resulting in our first collection of benchmarks, called $\mathtt{SV}$.
However, most of these benchmarks are satisfiable, but ABMC's main purpose is proving $\unsat$.
To obtain unsatisfiable instances, we modified {\sf HornKlaus} so that it negates the
assertions that characterize the error states,
resulting in our second set of benchmarks $\mathtt{SV}_\neg$.

We compared {\sf LoAT} with the state-of-the-art CHC solvers {\sf Eldarica}
\cite{eldarica}, {\sf Golem}\linebreak \cite{GolemCAV23},
and {\sf Z3} \cite{z3}.
We tested several configurations, and used the most powerful ones for proving $\unsat$:
The default configuration of {\sf Eldarica~2.2.1}, the symbolic execution engine of {\sf Golem~0.9.0}, and the BMC implementation of {\sf Z3~4.15.4}.
Additionally, we compared with a \underline{b}ase\underline{l}ine variant {\sf LoAT BL} without the contributions from \Cref{sec:closed,sect:qe}, i.e., it can only accelerate loops without arrays.

Note that some SV-COMP benchmarks contain non-linear arithmetic, which is not supported by {\sf Golem}, and sometimes prevents {\sf Eldarica} from computing interpolants.
Moreover, each of our benchmark collections contains $13$ examples with ``curried'' arrays (see \Cref{sec:preliminaries}), which are not supported by {\sf LoAT}.

We could not compare with \tool{MCMT}~\cite{alberti13,alberti15,albertiJAR15}, a
state-of-the-art model checker that supports array acceleration.
The reason is that \tool{MCMT}'s array acceleration introduces quantifiers, which are then handled via over-approximations (see \cite{mcmt-manual} for details), and thus they cannot be used for proving unsafety.
This shows the importance of our quantifier elimination technique from \Cref{sect:qe} and the use of $\lambda$s, as these two ingredients allow us to use acceleration for proving unsatisfiability.

We used \href{https://help.itc.rwth-aachen.de/service/rhr4fjjutttf/article/fbd107191cf14c4b8307f44f545cf68a/}{CLAIX-2023-HPC nodes} of the \href{https://help.itc.rwth-aachen.de/service/rhr4fjjutttf/}{RWTH Uni\-ver\-si\-ty High Performance Computing Cluster} with 10560 MiB memory limit and 60~s timeout per example.

\begin{figure}[t]
  \hspace{0.05\textwidth}
  \begin{minipage}{0.4\textwidth}
    \begin{tabular}{|l||c||c||c|c|}
      \hline \multicolumn{2}{|c||}{}                                       & \checkmark & unsat & sat \\
      \hline\hline \multirow{4}{*}{\ \rotatebox{90}{$\mathtt{SV}$ (201 ex.) }}        & \tool{LoAT}     & 25         & 24    & 1   \\
      \hhline{~----}                                     & \tool{Eldarica} & 2          & 0     & 2   \\
      \hhline{~----}                                     & \tool{LoAT BL}     & 2          & 2     & 0   \\
      \hhline{~----}                                     & \tool{Z3}       & 1          & 0     & 1   \\
      \hhline{~----}                                     & \tool{Golem}    & 0          & 0     & 0   \\
      \hline\hline \multirow{4}{*}{\  \rotatebox{90}{$\mathtt{SV}_\neg\!$ (201 ex.)  }} & \tool{LoAT}     & 129        & 120   & 9   \\
      \hhline{~----}                                     & \tool{Eldarica} & 78         & 68    & 10  \\
      \hhline{~----}                                     & \tool{LoAT BL} & 65    & 65    & 0  \\
      \hhline{~----}                                     & \tool{Z3}       & 62         & 62    & 0   \\
      \hhline{~----}                                     & \tool{Golem}    & 30         & 30    & 0   \\
      \hline
    \end{tabular}
    \label{tab}
  \end{minipage}
  \begin{minipage}{0.5\textwidth}
    \begin{tikzpicture}[scale=0.73]
      \begin{axis}[
        legend pos=south east,
        ylabel=solved instances from $\mathtt{SV}_\neg$,
        y label style={at={(axis description cs:0.1,.5)},anchor=south},
        y tick label style={rotate=90},
        xticklabel={$\pgfmathprintnumber{\tick}$s},
        ymax=135,
        xmin=-5,
        xmax=60,
        legend columns=2]
        \addplot[color=black,solid,thick] table[col sep=comma,header=false,x index=0,y index=1] {abmc_inv.csv}; \addlegendentry{\tool{\scriptsize LoAT}}
        \addplot[color=red,dashed,thick] table[col sep=comma,header=false,x index=0,y
          index=1] {eld_inv.csv}; \addlegendentry{\tool{\scriptsize Eldarica}}
        \addplot[color=OliveGreen,dotted,thick] table[col sep=comma,header=false,x index=0,y index=1] {abmc_no_arrays_inv.csv}; \addlegendentry{\tool{\scriptsize LoAT BL}}
        \addplot[color=violet,densely dashed,thick] table[col sep=comma,header=false,x index=0,y index=1] {z3_inv.csv}; \addlegendentry{\tool{\scriptsize Z3}}
        \addplot[color=blue,densely dotted,thick] table[col sep=comma,header=false,x index=0,y index=1] {golem_inv.csv}; \addlegendentry{\tool{\scriptsize Golem}}
      \end{axis}
    \end{tikzpicture}
  \end{minipage}
  \vspace*{.1cm}
\end{figure}

The results of our experiments are shown in the table above.
On $\mathtt{SV}$, all tools perform quite badly, which confirms
the observation from \Cref{sec:intro} that model checking array programs
is notoriously difficult.
\tool{LoAT} is the only tool that can prove\linebreak $\unsat$ in some cases.
On $\mathtt{SV}_\neg$, all tools can prove $\unsat$ sometimes, but \tool{LoAT} is significantly ahead of its closest competitors.
Whenever another tool can prove unsatisfiability, \tool{LoAT} can do
so as well (apart from $3$ of the examples with ``curried'' arrays).
The fact that \tool{LoAT} outperforms \tool{LoAT BL} shows that its good results are indeed due to our novel contributions.
For $\mathtt{SV}_\neg$, the plot on the right depicts the runtimes for solved instances.
It shows that \tool{LoAT} solves most instances quickly, i.e., our approach is not only competitive in terms of solved instances, but it is also quite efficient.
See \cite{loat-web,website} for more details on \tool{LoAT} and our evaluation.

\vspace*{-.2cm}

\subsubsection{Conclusion:}

We presented a new loop acceleration technique for array loops.
By using the novel notion of \emph{inductive lvalues},
it can
accelerate more classes of loops than earlier techniques, and it handles
arrays and scalars uniformly.
Due to the use of $\lambda$-expressions and our technique for eliminating
quantifiers from \Cref{sect:qe}, it
  improves
over previous SMT-based model checking algorithms for array programs,
in particular for proving unsafety, as demonstrated by our evaluation.

\vspace*{-.2cm}

\subsubsection{Acknowledgments:} We thank Matthias Heizmann
for vital initial discussions.

\begin{credits}
  \subsubsection{Disclosure of Interests:}
  The authors have no competing interests to declare that are relevant to the content of
  this article.
\end{credits}

\bibliographystyle{splncs04}
\bibliography{refs,crossrefs,strings}

\begin{appendix}
  \section{Reasoning about Lambdas}
\label{sec:lambdas}

Using the approach from \Cref{sect:qe}, we obtain quantifier-free closed forms for arrays, which allows us to encode many verification problems as SMT problems.

\begin{example}[Encoding Verification as SMT]
  Consider our running example \ref{leadLoop} once more, and assume that we want to verify the Hoare triple
  \[
    \{\phi \land i < k\}\ref{leadLoop}\{\psi\},
  \]
  where $\phi$ and $\psi$ are quantifier-free first-order formulas over $\VV$.
  Together with the closed form \eqref{eq:finalClosedForm}, we can use an acceleration
  techniques like \cite{acceleration-calculus} to derive the following precise
  characterization of the transitive closure of \ref{leadLoop}:
  \[
 n > 0 \land  i+n \leq k \land i' = i + n \land a' = a^{(n)}
  \]
  Then $\{\phi \land i < k\}\ref{leadLoop}\{\psi\}$ is a valid Hoare triple iff
  \begin{equation}
    \label{eq:smt}
    \phi \land i < k \land n > 0 \land i+n \leq k \land i' = i + n  \land a' = a^{(n)} \land i' \geq k \land \neg\psi[a/a',i/i']
  \end{equation}
  is unsatisfiable.
\end{example}

However, while all major SMT solvers support $\lambda$s as input (via non-standard extensions of SMT-LIB or their API), in our experience, they can rarely solve instances where $\lambda$-expressions cannot be eliminated via $\beta$-reduction, i.e., they can hardly handle cases where $\lambda$s cannot be evaluated immediately.
However, even in the simple example above, this is not the case, as the $\lambda$-expression $a^{(n)}$ occurs in a position that is not $\beta$-reducible (in the equation $a' = a^{(n)}$).

Thus, we now describe a simple and incomplete theory solver to deal with $\lambda$s, which turns out to be very effective in our setting.
More precisely, we assume that the input to our theory solver is a conjunction $\phi$ of literals that are either \emph{arithmetic literals} that relate arithmetic expressions (via the usual arithmetic relations, including (in)divisibility), or \emph{array literals} that relate array expressions via ${=}$ or ${\neq}$.
For array literals, we additionally require that the expressions on both sides have the same arity.
Here, it is important to note that, like SMT-LIB, we consider \emph{extensional} arrays, i.e., we have
\begin{equation}
  \tag{\sc Ext}
  \label{eq:ext}
  a = a' \iff \forall \rvec{i} \in \ZZ^{\arity(a)}.\ a[\rvec{i}] = a'[\rvec{i}].
\end{equation}
Then our goal is to either produce a model, or to return $\unsat$ if no such model exists.

The first ingredient of our theory solver is \emph{equality propagation}.
\begin{example}[Equality Propagation]
  To see why equality propagation is useful, assume that we want to check the Hoare triple
  \begin{equation}
    \label{eq:hoare-triple}
    \{b=a \land j=i<k\}\ref{leadLoop}\{a[i] = b[j]\}.
  \end{equation}
  So $b$ and $j$ store copies of the initial values of $a$ and $i$, respectively,
  and the post-condition claims that the final value of $a[i]$ is equal to the initial value of $a[i]$.
  Instantiating \eqref{eq:smt} correspondingly yields
  \[
    b=a \land j=i<k \land n > 0 \land i+n \leq k \land i' = i + n \land a' = a^{(n)} \land
    i' \geq k \land a'[i'] \neq b[j]. 
  \] 
  After propagating the equality $a' = a^{(n)}$, we obtain
  \[
    b=a \land j=i<k \land n > 0 \land i+n \leq k \land i' = i + n \land i' \geq k \land a^{(n)}[i'] \neq b[j].
  \]
  Now $a^{(n)}[i']$ can be $\beta$-reduced, and the resulting $\lambda$-free formula can be handled by off-the-shelf SMT techniques.
  Indeed, by taking the equality $i' = i + n$ into account, it becomes apparent that $a^{(n)}[i']$ evaluates to $a[i]$ due to the second case of \eqref{eq:finalClosedForm} if $n>0$, or due to the third case if $n=0$, i.e., we obtain
  \[
    b=a \land j=i \land \ldots \land a[i] \neq b[j],
  \]
  which is clearly unsatisfiable.
  Thus, \eqref{eq:hoare-triple} is valid.
\end{example}

\begin{algorithm}[t]
  \KwIn{conjunction of literals $\phi$}
  \KwResult{a model of $\phi$, $\unsat$, or $\unknown$}
  $\phi \gets \phi[(p \neq q) / (p[\rvec{i}] \neq q[\rvec{i}]) \mid (p \neq q) \in \phi, \arity(p) > 0, \rvec{i} \subseteq \VV \text{ fresh}]$ \tcp*{remove $\neq$}
  \Repeat{$\phi$ does not change anymore}{
    \lIf{$\phi$ contains a literal $x = p$ where $x \in \VV$}{$\phi \gets \phi[x/p]$ \tcp*[f]{propagate eq.}}
    \lIf{$\phi$ contains $(\lambda \rvec{i}.\ e)[\rvec{e}]$}{$\phi \gets \phi[(\lambda \rvec{i}.\ e)[\rvec{e}]/e[\rvec{i}/\rvec{e}]]$ \tcp*[f]{beta reduction}}
  }
  $\phi_{abs} \gets \phi[(\lambda \rvec{i}.\ e) / x_{\rvec{i}, e} \mid (\lambda
    \rvec{i}.\ e) \text{ occurs in } \phi, x_{\rvec{i}, e} \in \VV \text{ fresh}]$\;\label{line:abs}
  $\mathit{Idx} \gets \{\rvec{e} \mid a[\rvec{e}] \in \Lval(\phi), \arity(a) > 0 \}$\; \label{line:idx}
  \Repeat{$\phi_{abs}$ does not change anymore}{
    \lIf{$\phi_{abs}$ is $\unsat$}{\Return $\unsat$ \label{line:unsat}}
    let $\mathcal{M}$ be a model of $\phi_{abs}$\;
    \lIf{$\mathcal{M}$ is a model of $\phi$}{\Return $\mathcal{M}$ \label{line:sat}}
    \ForEach{$(p = q) \in \phi \text{ and } \rvec{e} \in \mathit{Idx} \text{ such that } \mathcal{M} \text{ violates } p[\rvec{e}] = q[\rvec{e}]$ \label{line:refine}}{$\phi_{abs} \gets \phi_{abs} \land p[\rvec{e}] = q[\rvec{e}]$ \tcp*{refinement} \label{line:refine-body}}
  }
  \Return $\unknown$ \tcp*[f]{refinement failed} \label{line:fail}
  \caption{Theory Solving with $\lambda$s}
  \label{alg}
\end{algorithm}

Due to \eqref{eq:ext}, an array literal $p \neq q$ can easily be converted into the
arithmetic literal $p[\rvec{i}] \neq q[\rvec{i}]$ where $\rvec{i}$ consists of fresh
(implicitly existentially quantified) scalar variables.
Using these two ingredients and $\beta$-reduction, our theory solver tries to eliminate as many $\lambda$-expressions as possible, which constitutes the first part of our theory solver, see \Cref{alg}.
Here, we slightly abuse notation by using ``substitutions'' that replace expressions or
literals instead of variables, and we do not mention arity constraints explicitly (i.e.,
whenever we write, e.g., $p[\rvec{i}]$, we implicitly assume $|\rvec{i}| = \arity(p)$).

Then we \emph{abstract} the SMT problem by replacing all remaining $\lambda$-expressions with fresh variables of the corresponding arity (Line~\ref{line:abs}).
If the resulting SMT problem is unsatisfiable, then the original problem $\phi$ is unsatisfiable as well, so our solver returns $\unsat$ (Line~\ref{line:unsat}).
Otherwise, we check if the model that we obtained for the abstraction is also a model for
$\phi$, and in that case, our solver returns it (Line~\ref{line:sat}).
If this is not the case, then we refine the abstraction by considering the set
$\mathit{Idx}$ of
all vectors of
expressions $\rvec{e}$ that occur syntactically as indices\footnote{This idea stems from the decision procedure for the \emph{array property fragment} \cite{array-property-fragment}.} in parts of $\phi$ that are not
below a $\lambda$, and all array literals $p = q$ that were abstracted (Lines~\ref{line:idx} and \ref{line:refine} -- note that $\beta$-redexes and array-inequations can \emph{always} be eliminated):
For each pair of a literal $p = q$ and an index $\rvec{e}$ of the corresponding arity
where $p[\rvec{e}] = q[\rvec{e}]$ is violated, we add $p[\rvec{e}] = q[\rvec{e}]$ to the
abstraction (Line~\ref{line:refine-body}).
If we find at least one such pair, then the refinement was successful and we search for a model for the refined abstraction.
Otherwise, the refinement fails and our solver returns $\unknown$
(Line~\ref{line:fail}).

To compute $\mathit{Idx}$ in Line~\ref{line:idx},
we lift $\Lval$ to expressions as follows:
\[
  \Lval(e) \Def \begin{cases}
    \{e\} & \text{if } e \in \Lval \\
    \Lval(e_1) \cup \Lval(e_2) & \text{if } e = e_1 \circ e_2\\
    &\text{\phantom{if} where $\circ$ is an arithmetic operator}\\
    \Lval(\mu) \cup \Lval(e_1) \cup \Lval(e_2) & \text{if } e = \mIte{\mu}{e_1}{e_2} \\
    \emptyset & \text{otherwise}
  \end{cases}
\]
Moreover, for quantifier-free formulas, we define:
\[
  \Lval(\mu) \Def \begin{cases}
    \Lval(\mu_1) \cup \Lval(\mu_2) & \text{if } \mu = \mu_1 \bullet \mu_2 \text{ where  $\bullet \in \{{\land},{\lor},\ldots\}$}\\
    & \text{\phantom{if} is a binary Boolean connective}\\
    \Lval(\mu') & \text{if } \mu = \neg\mu' \\
    \Lval(e_1) \cup \Lval(e_2) & \text{if } \mu = e_1 \bowtie e_2 \text{ is an arithmetic
      literal}\\
    \emptyset & \text{otherwise}
  \end{cases}
  \]
  So in particular, for array literals we have $\Lval(p=q) = \emptyset$, i.e., we do not
  consider lvalues below $\lambda$s.

  To see why our refinement in Line~\ref{line:refine-body}
  may fail and \Cref{alg} may reach
Line~\ref{line:fail}, consider the formula
\[
  (\lambda i.\ 0) = (\lambda i.\ 1)
\]
which gets abstracted to
\[
  x_{i,0} = x_{i,1},
\]
and assume that we obtain a model $\mathcal{M}$ with $\mathcal{M}(x_{i,0}) =
\mathcal{M}(x_{i,1}) =
\lambda i.\ 42$.
Then $\mathcal{M}$ is not a model for the original formula, but as $\mathit{Idx}$ is
empty, the refinement fails and \Cref{alg} returns $\unknown$.

Note that our approach is very similar to \emph{lemmas on demand}~\cite{lemmas-on-demand,lemmas-phd,DBLP:conf/fmcad/PreinerNB13}.
However, their technique is more sophisticated then our approach.
For example, our selection of the relevant indices $\mathit{Idx}$ is a lot simpler.
Moreover, for each array literal $a=b$, \emph{lemmas on demand} adds the implication
\[
  a \neq b \implies \exists \rvec{i}.\ a[\rvec{i}] \neq b[\rvec{i}]
\]
to the SMT problem \cite{lemmas-phd}, which is required for concluding satisfiability in cases where the refinement fails.
Thus, we return $\unknown$ in Line~\ref{line:fail}, but their algorithm is a decision procedure (assuming that the abstraction in Line~\ref{line:abs} yields formulas in a decidable logic).
However, to the best of our knowledge, \emph{lemmas on demand} has only been implemented in \tool{Bitwuzla}~\cite{bitwuzla}, which does not support integer arithmetic.
Thus, we use our own, incomplete implementation.
Improving it is a subject of future work.

  \appendixproofsection{Proofs}\label{sec:proofs}

We first prove the following auxiliary lemma, which shows that the value that is referenced by a displacing lvalue in the $n^{th}$ iteration has not been changed in previous iterations.
\begin{restatable}[Stability of Displacing Lvalues]{lemma}{stability}
  \label{lem:displacing}
  If \ref{loop} is monotonic and $x[\rvec{r}] \in \LL$ is displacing, then we have
  \[
    x[\up^n(\rvec{r})] = \up^m(x)[\up^n(\rvec{r})]
  \]
  for all $m \leq n$.
\end{restatable}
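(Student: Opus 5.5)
Throughout, I assume w.l.o.g.\ that \ref{loop} is $x$-increasing. The plan is an induction on $m$ with $n$ fixed, showing $\up^m(x)[\up^n(\rvec{r})] = \up^{m-1}(x)[\up^n(\rvec{r})]$ for all $1 \leq m \leq n$. Chaining these equalities down to $\up^0(x) = x$ gives the claim. The base case $m = 0$ is trivial.

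For the step, I unfold $\up^m(x) = \up^{m-1}(\up_x(\vec{\ell}))$. Since \ref{loop} is a single simultaneous update, this is the array expression
\[
\lambda \rvec{i}.\ \mathbf{if}\ (\rvec{i} = \up^{m-1}(\rvec{r}'_1))\ \ldots\ \mElse\ \up^{m-1}(x)[\rvec{i}],
\]
where the $\rvec{r}'_j$ range over all indices with $x[\rvec{r}'_j] \in \vec{\ell}$. The order of the cases is irrelevant by \eqref{eq:distinct-indices}. Applying this to the index $\up^n(\rvec{r})$, it suffices to show $\up^{m-1}(\rvec{r}') \neq \up^n(\rvec{r})$ for every $x[\rvec{r}'] \in \vec{\ell}$ and every $m \leq n$. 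Then only the else-branch fires, and we obtain $\up^{m-1}(x)[\up^n(\rvec{r})]$.

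This inequality is the key step. By the displacing property, $\rvec{r}' < \up(\rvec{r})$ is valid, i.e.\ it holds in every state. Since validity is preserved under substitution, applying $\up^{m-1}$ yields $\up^{m-1}(\rvec{r}') < \up^{m}(\rvec{r})$.

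Next I need $\up^m(\rvec{r}) \leq \up^n(\rvec{r})$ for $m \leq n$. This should follow from monotonicity, applying $\up^k$ to a valid inequation $\rvec{r} \leq \up(\rvec{r})$ and chaining. Transitivity of the lexicographic order then gives $\up^{m-1}(\rvec{r}') < \up^n(\rvec{r})$, hence the two are distinct.

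I expect the main obstacle to be justifying $\rvec{r} \leq \up(\rvec{r})$ for the displacing lvalue $x[\rvec{r}]$ itself. \Cref{def:monotonic} only asserts monotonicity for indices $\rvec{r}'$ with $x[\rvec{r}'] \in \vec{\ell}$, and a displacing lvalue need not occur in $\vec{\ell}$. My first attempt would be to avoid it altogether. I would compare against the write indices instead: for $k \geq m-1$, monotonicity of $\rvec{r}'$ gives $\up^{m-1}(\rvec{r}') \leq \up^{k}(\rvec{r}')$. Taking $k = n-1$ and using the displacing property at iteration $n-1$ gives $\up^{n-1}(\rvec{r}') < \up^n(\rvec{r})$. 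Chaining these yields $\up^{m-1}(\rvec{r}') < \up^n(\rvec{r})$ directly, using only the displacing condition, monotonicity of write indices, and transitivity of the lexicographic order. If this chaining works, no monotonicity assumption on $\rvec{r}$ itself is needed.
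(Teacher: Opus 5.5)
Your proof is correct. Its skeleton matches the paper's: induction on $m$, unfolding $\up^m(x)=\up^{m-1}(\up_x(\vec{\ell}))$, and using read-over-write to reduce everything to $\up^{m-1}(\rvec{r}') < \up^n(\rvec{r})$ for every $x[\rvec{r}'] \in \vec{\ell}$. Where you differ is in how you derive that inequality.

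The paper chains $\up^{m-1}(\rvec{r}') < \up^{m}(\rvec{r}) \leq \up^n(\rvec{r})$. It applies the displacing condition at iteration $m-1$ and then invokes ``monotonicity'' for the read index $\rvec{r}$ itself. As you observe, \Cref{def:monotonic} only constrains indices $\rvec{r}'$ with $x[\rvec{r}'] \in \vec{\ell}$. So that step is justified only when the displacing lvalue is itself written, as $a[i+1]$ is in the running example.

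In general the paper's step can fail. Take a body that simultaneously performs $i \assign i+1$, $j \assign j-1$ and $a[0,i] \assign a[1,j]$. The loop is monotonic, and $a[1,j]$ is displacing because $(0,i) < (1,j-1)$ always holds. Yet $\up^m((1,j)) = (1,j-m) \leq (1,j-n)$ is false whenever $m<n$. The lemma's conclusion still holds in this example, and your chain proves it.

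Your chain is $\up^{m-1}(\rvec{r}') \leq \up^{n-1}(\rvec{r}') < \up^n(\rvec{r})$. It uses monotonicity only for the write index $\rvec{r}'$, which is exactly what the definition provides. It uses the displacing condition only once, at iteration $n-1$.

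What each route buys: the paper's chain matches the intuition that the read position stays ahead of all write positions, and it is valid whenever $\rvec{r} \leq \up(\rvec{r})$ holds. Your chain proves the lemma under its stated hypotheses, without that extra assumption. In a final write-up, drop the abandoned first attempt and present the second chain directly as the proof of the key inequality.
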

\begin{proof}
  We use induction on $m$, where the case $m = 0$ is trivial.
  Let $m = k+1$.
  Then for all $x[\rvec{r}'] \in \vec{\ell}$, we have:
  \usetagform{default}%
  \begin{align*}
    \up^k(\rvec{r}') & {} < \up^{k+1}(\rvec{r}) \tag{as $x[\rvec{r}]$ is displacing} \\
                         & {} \leq \up^{n}(\rvec{r}) \tag{by monotonicity, as $k+1 \leq n$}
  \end{align*}
  \usetagform{box}%
  Thus,
  \begin{equation}
    \label{eq:lem-displacing}
    \up^k(\rvec{r}') < \up^{n}(\rvec{r}) \qquad \text{ whenever } x[\rvec{r}'] \in \vec{\ell}
  \end{equation}
  Hence, we get
  \usetagform{default}%
  \begin{align*}
    \up^m(x)[\up^n(\rvec{r})] & {} = \up^{k+1}(x)[\up^n(\rvec{r})] \tag{as $m = k+1$} \\
                              & {} = \up^k(x)[\up^n(\rvec{r})] \tag{by \eqref{eq:sem} due to \eqref{eq:lem-displacing}} \\
                              & {} = x[\up^n(\rvec{r})] \tag{IH}
  \end{align*}
  \usetagform{box}%
  \qed
\end{proof}

\appendixproof*{thm:displacing}
\appendixproof*{thm:inductive}
\appendixproof*{thm:closed}
\appendixproof*{thm:closed-arrays}

  \section{Semantics of Expressions}
\label{sec:semantics}

In this section, we define the semantics of expressions as introduced in \Cref{sec:preliminaries}.

\begin{definition}[Semantics of Expressions]
  \label{def:semantics}
  Let $s$ with $\dom(s) \subseteq \VV$ be a function that maps variables $x \in \dom(s)$ to functions of
  arity $\arity(x)$. 
  The \emph{interpretation} $\sem{e}_s$ of an (array) expression $e$ w.r.t.\  $s$ is defined as follows:
  \[
    \sem{e}_s \Def
    \begin{cases}
      e & \text{if } e \in \ZZ \\
      \sem{e_1}_s \circ \sem{e_2}_s & \text{if } e = e_1 \circ e_2\\
      \sem{p}_s(\sem{\rvec{e}}_s) & \text{if } e = p[\rvec{e}], p \text{ is an array expression} \\
      s(e) & \text{if } e \in \VV \cap \dom(s) \\
      e & \text{if } e \in \VV \setminus \dom(s) \\
      \lambda \rvec{i}.\ \sem{e'}_{s|_{\dom(s) \setminus \rvec{i}}} & \text{if } e = \lambda \rvec{i}.\ e' \\
      \sem{e_1}_s & \text{if } e = \mIte{\mu}{e_1}{e_2} \text{ and } \sem{\mu}_s = \true \\
      \sem{e_2}_s & \text{if } e = \mIte{\mu}{e_1}{e_2} \text{ and } \sem{\mu}_s = \false
  \end{cases}
\]
    Here, we lift $\sem{\_}_s$ to vectors of expressions and formulas in the obvious way,
    and $s|_{\dom(s) \setminus \rvec{i}}$ denotes the restriction of $s$ to the domain
    $\dom(s) \setminus \rvec{i}$.
\end{definition}

\end{appendix}

\end{document}